\newif\ifjicv
\newtheorem{definition}{Definition}
\newtheorem{proposition}{Proposition}
\newtheorem{corollary}{Corollary}
\newtheorem{remark}{Remark}
\newcommand{\descr}[1]{\vspace{0.2cm} \noindent \textbf{#1}}
\definecolor{citegreen}{HTML}{458B00}
\begin{document}

\ifjicv
\title[Cryptography in Malware Obfuscation]{Use of Cryptography in Malware Obfuscation}

\author*[1]{\fnm{Hassan Jameel} \sur{Asghar}}\email{hassan.asghar@mq.edu.au}

\author[1]{\fnm{Benjamin Zi Hao} \sur{Zhao}}\email{ben\_zi.zhao@mq.edu.au}

\author[1]{\fnm{Muhammad} \sur{Ikram}}\email{muhammad.ikram@mq.edu.au}

\author[1]{\fnm{Giang} \sur{Nguyen}}\email{duclinhgiang.nguyen@hdr.mq.edu.au}

\author[1]{\fnm{Dali} \sur{Kaafar}}\email{dali.kaafar@mq.edu.au}

\author[2]{\fnm{Sean} \sur{Lamont}}\email{sean.lamont2@dst.defence.gov.au}

\author[2]{\fnm{Daniel} \sur{Coscia}}\email{daniel.coscia1@dst.defence.gov.au}

\affil[1]{\orgname{Macquarie University}, \orgaddress{\country{Australia}}}

\affil[2]{\orgname{Defence Science and Technology Group}, \orgaddress{\country{Australia}}}

\else

\title{Use of Cryptography in Malware Obfuscation$^*$\thanks{$^*$This is the full version of the paper with the same title to appear in the Journal of Computer Virology and Hacking Techniques.}}

\author{\IEEEauthorblockN{Hassan Jameel Asghar}
\IEEEauthorblockA{Macquarie University\\
Australia\\
Email: hassan.asghar@mq.edu.au}
\and
\IEEEauthorblockN{Benjamin Zi Hao Zhao}
\IEEEauthorblockA{Macquarie University\\
Australia\\
Email: ben\_zi.zhao@mq.edu.au}
\and
\IEEEauthorblockN{Muhammad Ikram}
\IEEEauthorblockA{Macquarie University\\
Australia\\
Email: muhammad.ikram@mq.edu.au}
\and
\IEEEauthorblockN{Giang Nguyen}
\IEEEauthorblockA{Macquarie University\\
Australia\\
Email: duclinhgiang.nguyen@hdr.mq.edu.au}
\and
\IEEEauthorblockN{Dali Kaafar}
\IEEEauthorblockA{Macquarie University\\
Australia\\
Email: dali.kaafar@mq.edu.au}
\and
\IEEEauthorblockN{Sean Lamont \\ and Daniel Coscia}
\IEEEauthorblockA{Defence Science and Technology Group\\
Australia\\
Email: sean.lamont2@dst.defence.gov.au\\
daniel.coscia1@dst.defence.gov.au}
}
\maketitle
\thispagestyle{plain}
\pagestyle{plain}
\fi

\ifjicv
\abstract{Malware authors often use cryptographic tools such as XOR encryption and block ciphers like AES to obfuscate part of the malware to evade detection. Use of cryptography may give the impression that these obfuscation techniques have some provable guarantees of success. In this paper, we take a closer look at the use of cryptographic tools to obfuscate malware. We first find that most techniques are easy to defeat (in principle), since the decryption algorithm and the key is shipped within the program. In order to clearly define an obfuscation technique's potential to evade detection we propose a principled definition of malware obfuscation, and then categorize instances of malware obfuscation that use cryptographic tools into those which evade detection and those which are detectable. We find that schemes that are hard to de-obfuscate necessarily rely on a construct based on environmental keying. We also show that cryptographic notions of obfuscation, e.g., indistinghuishability and virtual black box obfuscation, may not guarantee evasion detection under our model. However, they can be used in conjunction with environmental keying to produce hard to de-obfuscate version of programs.}  


\maketitle 
\else
\begin{abstract}
Malware authors often use cryptographic tools such as XOR encryption and block ciphers like AES to obfuscate part of the malware to evade detection. Use of cryptography may give the impression that these obfuscation techniques have some provable guarantees of success. In this paper, we take a closer look at the use of cryptographic tools to obfuscate malware. We first find that most techniques are easy to defeat (in principle), since the decryption algorithm and the key is shipped within the program. In order to clearly define an obfuscation technique's potential to evade detection we propose a principled definition of malware obfuscation, and then categorize instances of malware obfuscation that use cryptographic tools into those which evade detection and those which are detectable. We find that schemes that are hard to de-obfuscate necessarily rely on a construct based on environmental keying. We also show that cryptographic notions of obfuscation, e.g., indistinghuishability and virtual black box obfuscation, may not guarantee evasion detection under our model. However, they can be used in conjunction with environmental keying to produce hard to de-obfuscate version of programs.
\end{abstract}
\fi


\ifjicv
\keywords{Malware obfuscation, malware detection, cryptography, environmental keying}
\fi




\section{Introduction}
\label{sec:crypt-obfs}
Malware developers often obfuscate their programs in the hope that the program goes undetected by malware detectors, e.g., antivirus software. Many researchers have documented obfuscation techniques used in malware in the wild~\cite{aligot, hidden-str-obfs, MAIORCA201516}. 
Of particular interest is the use of cryptographic tools to obfuscate malware, e.g., encrypting parts of the program via a block cipher. Use of cryptography may give the impression that the said technique is provably resistant to de-obfuscation. However, as we shall see shortly, most use of cryptography by malware authors to obfuscate their programs amounts to {security via obscurity}. On the other hand some techniques can be shown to require substantial computing power to de-obfuscate~\cite{environ-keying}. A key question to ask is how do we evaluate if a malware obfuscation technique is resistant to de-obfuscation or not? In this paper, we propose a definition of malware obfuscation, and then categorise (cryptographic) malware obfuscation techniques, used by malware in the wild as well as proposed by malware analysts and computer security researchers, under this framework into those that are easily detectable versus those that avoid detection. {Note that detection here means the ability of the malware detector to detect the program as malware, and not to detect whether parts of the program are encrypted or not, which could be the case with both benign and malware programs.}




The reader may find some similarity of the topic to cryptographic program obfuscation~\cite{barak-impossibility}. Although cryptographic program obfuscation can be used in malware obfuscation, 
it is helpful to distinguish between the goals of malware obfuscation and that of program obfuscation in general. In case of the latter, given a program $P$, the goal is to create an obfuscated program $\hat{P}$ which is functionally equivalent to $P$ but which is harder to reverse-engineer~\cite{collberg1997taxonomy}. The programmer therefore wishes to hide how the program is implemented. On the other hand, the goal of malware obfuscation is to avoid being detected, and therefore labelled as malware, before it has run its functionality at least once on a target machine. Thus, simply obfuscating the program will not necessarily meet the goals of malware obfuscation, as the resulting program, by virtue of being functionally equivalent to the original program, can still be detected as malware through its input/output behavior, e.g., via dynamic analysis. Having said that, often times a malware author simply aims to avoid detection via inspection of the program code, i.e., via static analysis only. 

Perhaps the most common example of the use of cryptography in obfuscation is \emph{string obfuscation}~\cite{hidden-str-obfs}. Under this type of obfuscation, strings such as URLs, paths and constants are encrypted using a cryptographic cipher such as AES or DES~\cite{hidden-str-obfs}. Through string obfuscation, the malware author may wish to hide URLs which may have been flagged as malicious by anti-malware programs. Note that regardless of the security of the underlying cipher, e.g., DES vs AES, such obfuscation can in principle be undone. This is because the decryption logic of the program, including the decryption key, {is most often} provided within the application~\cite{hidden-str-obfs, wermke-obfs-google-play}. Thus, such strings can be de-obfuscated by a sufficiently sophisticated de-obfuscation tool or a skilled programmer using, for example, dynamic analysis. 

One of the goals of this paper is to decouple the use of cryptographic tools in the manner exemplified in string obfuscation, which essentially amounts to \emph{security by obscurity}, versus a more principled approach whereby robustness against de-obfuscation is guaranteed not merely because the detector was unable to locate the decryption/decoding routine within the program. At the outset we need to be clear about the opposing goals of malware obfuscation and detection. The goal of a malware author is to ensure that his/her program runs on the target machine. For this to happen, the malware should ``trick'' any anti-malware software running on the target system to believe it is a benign program. Obfuscation, then, is one of the techniques towards that end. On the other hand, malware defense would like to avoid such mistakes. On the precautionary side, we could label any use of obfuscation in a program as indicative of malicious intent. However, the fact that obfuscation is used by benign programs as well, e.g., to protect intellectual property~\cite{hidden-str-obfs} or {to defend against attacks on the software by creating \emph{metamorphic} copies~\cite{kazi2013metamorphic, rana2014metamorphic},} shows that this will inevitably block the execution of benign programs. This shows that we cannot simply label a program malware on the basis of obfuscation alone. A good detector should be able to distinguish between an obfuscation of a benign program versus that of malware. 

\descr{Our Contributions.} We give a formal model of malware obfuscation in Section~\ref{sec:formal}, where we define obfuscation to be successful if it degrades the combined false positive and negative rates of a malware detector. Under this formal model, we show that evasion is not possible if the obfuscated program is functionally equivalent to the original program. We then relax the requirement of functional equivalence to requiring the obfuscated program to run {its functionality} on selected target machines. We discuss a few of the prominent techniques from cryptography used in malware obfuscation, but which do not evade detection under our model in Section~\ref{sec:non-evasive}. We then give a description of an obfuscator which provably evades detection under our model in Section~\ref{sub:construct}, but which is not useful in the sense that the program's functionality may not even run on the target machine. We analyse a real-world alternative of such a scheme based on environmental keying~\cite{riordan-environ} from the lens of our model in Section~\ref{sub:envir}. In Section~\ref{sec:deny}, we discuss the use of deniable encryption in malware obfuscation and its shortcomings. Section~\ref{sub:formal-crypto} discusses cryptographic notions of obfuscation, such as indistinguishability obfuscation~\cite{barak-impossibility}, and its relation to malware obfuscation defined in this paper. We analyse the prevalence of cryptographic malware obfuscation in the real-world in Section~\ref{sec:real-world}, and discuss related work in Section~\ref{sub:rw}. Finally, we discuss some limitations of our treatise and avenues for future work in Section~\ref{sec:discuss}.    


\section{Formal Model and Implications}
\label{sec:formal}
A program $P$ is modelled as a probabilistic polynomial time Turing (PPT) machine. We consider a set $S$ of programs with two disjoint subsets: $\mathsf{Malware}$ and $\mathsf{Benign}$. We assume the two subsets to be mutually exclusive. We shall often use $M$ to denote a generic malware, i.e., a member of $\mathsf{Malware}$. We assume that $S$ is sampled from a joint probability distribution $\mathcal{D}$ of programs and their labels (malware or benign). We also assume that the set $S$ is of polynomial size. Thus, the set is not supposed to capture all possible benign and malicious programs, but rather a representative sample, on which we can test the capabilities of obfuscators and detectors. The assumption of $S$ being of polynomial size reflects the fact that the performance of any malware detection task is evaluated on a limited set of programs. Furthermore, we are also interested in knowing whether obfuscation can evade detection on programs that have already been labelled. A malware detector $D$, is a program, which takes as input a program $P$, and outputs 1 if it is malware and 0 otherwise. The type-I and type-II errors (false positives and false negatives, respectively), associated with $D$ are defined as:
\begin{align}
\label{eq:type-errors}
    \alpha_D &= \Pr[D(P) = 1 \mid P \in \mathsf{Benign}], \nonumber\\
    \beta_D &= \Pr[D(P) = 0 \mid P \in \mathsf{Malware}]. 
\end{align}
Here the probability is over the distribution $\mathcal{D}$ and any randomness employed by the detector $D$.\footnote{The notation $D(P)$ covers both static and dynamic analysis. In the former, the detector only makes its decision based on taking the description of the program $P$ as a string. In the latter, the program can run $P$ as a subroutine, feeding it with required inputs. This engulfs sandboxed environments.} Note that probability over the distribution $\mathcal{D}$ means that we sample a program uniformly at random from $S$ to calculate the two errors. When we talk about the total error or simply the error of a detector $D$, we mean the quantity $\alpha_D + \beta_D$. 
Without loss of generality, we can assume that $\alpha_D + \beta_D \leq 1$~\cite{how-far-can-we-go}. Because otherwise we have $\alpha_D + \beta_D > 1$, and we can instead use a detector which flips the output of $D$. The error of this detector is given by:
\[
1 - \alpha_D + 1 - \beta_D = 2 - (\alpha_D + \beta_D) < 1.
\]
We shall call $D$ trivial if it makes its decisions independent of the program, i.e., benign or malware. 
\begin{proposition}
\label{prop:trivial}
We have
\begin{enumerate}
    \item If $D$ is trivial then $\alpha_D + \beta_D = 1$.
    \item Conversely, if for any $D$, $\alpha_D + \beta_D = 1$, then there exists a trivial $D'$ such that $\alpha_{D'} = \alpha_D$ and $\beta_{D'} = \beta_D$.
\end{enumerate}
\end{proposition}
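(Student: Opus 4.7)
The plan is to treat both directions as essentially a one-line calculation built on the same observation: a trivial detector is characterized by a single parameter, namely the probability $p$ with which it outputs $1$, and this parameter alone pins down both $\alpha_D$ and $\beta_D$ in a way that forces $\alpha_D + \beta_D = 1$.

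For part (1), I would begin by formalizing triviality: since $D$ ignores its input, the random variable $D(P)$ depends only on $D$'s internal coins, so there is some $p \in [0,1]$ with $\Pr[D(P)=1] = p$ regardless of whether $P$ is drawn from $\mathsf{Benign}$ or $\mathsf{Malware}$. Plugging this into the definitions in~(\ref{eq:type-errors}), I get $\alpha_D = p$ and $\beta_D = 1 - p$, so their sum is $1$.

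For part (2), given an arbitrary $D$ (not necessarily trivial) with $\alpha_D + \beta_D = 1$, I would explicitly construct $D'$ as follows: on any input program, $D'$ ignores the program and outputs $1$ with probability $\alpha_D$ and $0$ with probability $1 - \alpha_D = \beta_D$. This $D'$ is trivial by construction. Applying part~(1)'s computation with $p = \alpha_D$ gives $\alpha_{D'} = \alpha_D$ and $\beta_{D'} = 1 - \alpha_D = \beta_D$, as required.

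There is no real obstacle here; the only subtlety worth flagging explicitly is that ``independent of the program'' must be interpreted as independence in distribution (the detector is still allowed to use internal randomness), so that $p$ is well-defined and the conditional probabilities over $\mathcal{D}$ collapse to $p$ and $1-p$ respectively. Once that is spelled out, both directions are immediate.
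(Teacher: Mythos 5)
Your proof is correct and follows essentially the same route as the paper: part (1) rests on the observation that triviality collapses the conditional probabilities $\Pr[D(P)=1 \mid P \in \mathsf{Benign}]$ and $\Pr[D(P)=0 \mid P \in \mathsf{Malware}]$ to the unconditional $p$ and $1-p$, and part (2) uses the identical construction of a $D'$ that outputs $1$ with probability $\alpha_D$. Your explicit remark that ``independent of the program'' means independence in distribution, so that $p$ is well-defined, is a worthwhile clarification but not a departure from the paper's argument.
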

\begin{proof}
For part (1), assume $D$ is trivial. Let $P \in S$. Then:
\begin{align*}
    1 &= \Pr[D(P) = 1] + \Pr[D(P) = 0] \\
     &= \Pr[D(P) = 1 \mid P \in \mathsf{Benign}] \\
     &+ \Pr[D(P) = 0 \mid P \in \mathsf{Malware}] \\
     &= \alpha_D + \beta_D.
\end{align*}
For part (2), let $D'$ be a detector which with probability $\alpha_D$ outputs 1, else outputs 0. Then
\begin{align*}
     \alpha_{D'} &= \Pr[D'(P) = 1 \mid P \in \mathsf{Benign}] \\
     &= \Pr[D'(P) = 1] = \alpha_D,
\end{align*}
and 
\begin{align*}
    \beta_{D'} &= \Pr[D'(P) = 0 \mid P \in \mathsf{Malware}] = \Pr[D'(P) = 0] \\
    &= 1 - \alpha_D = \beta_D,
\end{align*}

as required.
\end{proof}
In light of the proposition, we shall call any $D$ with $\alpha_D + \beta_D =1$ as being trivial.
\begin{corollary}
\label{cor:non-trivial}
If $D$ is non-trivial then $\alpha_D + \beta_D < 1$.
\qed
\end{corollary}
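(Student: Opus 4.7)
The plan is to read the corollary as an immediate consequence of two earlier observations, so essentially no new work is required. First, I would invoke the convention stated in the sentence directly preceding the corollary: any detector $D$ with $\alpha_D + \beta_D = 1$ is, by definition, trivial. Contrapositively, if $D$ is non-trivial then $\alpha_D + \beta_D \neq 1$. Second, I would recall the without-loss-of-generality discussion given earlier, where it was argued that we may restrict attention to detectors satisfying $\alpha_D + \beta_D \leq 1$ (since otherwise the output-flipping detector $\bar D$ satisfies $\alpha_{\bar D} + \beta_{\bar D} = 2 - (\alpha_D + \beta_D) < 1$, and is strictly better). Combining these two facts, $\alpha_D + \beta_D \leq 1$ together with $\alpha_D + \beta_D \neq 1$ forces $\alpha_D + \beta_D < 1$, which is the claim.

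There is no real obstacle: the only thing to be careful about is being explicit that the WLOG convention $\alpha_D + \beta_D \leq 1$ is in force when the corollary is applied. If one preferred a proof that does not rely on this convention, one could instead argue the contrapositive directly using Proposition~\ref{prop:trivial}(2): if $\alpha_D + \beta_D \geq 1$, then either it equals $1$ (and $D$ is trivial by the extended definition, or by Proposition~\ref{prop:trivial}(2) it matches the error profile of a trivial detector), or it exceeds $1$, in which case the flipped detector achieves total error strictly below $1$, so we would again replace $D$ by its flip and land in the equality case. Either route renders the corollary a one-line consequence, so I would present it in the short form above.
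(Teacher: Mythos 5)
Your argument is correct and matches the paper's intent exactly: the corollary is stated with no proof precisely because it follows immediately from the convention that any $D$ with $\alpha_D + \beta_D = 1$ is called trivial, combined with the earlier without-loss-of-generality reduction to $\alpha_D + \beta_D \leq 1$. Nothing further is needed.
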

In particular, any $D$ is trivial if $\alpha_D = 1$ or $\beta_D = 1$. For instance, if $S$ is a set of all programs that use an encryption algorithm, and $D$ is the malware detector that outputs 1 if a program uses an encryption algorithm. Then we have $\alpha_D = 1$, even though $\beta_D = 0$. This is exactly the example we highlighted in the preamble. The case when $\alpha_D = \beta_D = 0$ for all possible sets of programs $S$ is not possible since malware detection is undecidable~\cite{cohen-viruses, virus-impossibility}. Thus, in practice, any $D$ is expected to give a tradeoff. We shall assume that there is at least one non-trivial detector $D$ for the set $S$ of programs. Otherwise, obfuscation is pointless.

\begin{remark}
As mentioned, the programs $D$ and $P$ are PPT algorithms. This means that the size of $P$ is bounded by a polynomial. Also, if $D$ simply runs $P$ as a subroutine, it can only evaluate $P$ on polynomially many inputs from a possibly larger space.
\end{remark}

\descr{Functional Equivalence.} We say that two programs $P_1$ and $P_2$ are functionally equivalent if for all inputs $x$, we have $P_1(x) = P_2(x)$. Otherwise they are functionally inequivalent. We assume that any benign program is functionally inequivalent to a malware and vice versa. On the other hand, the benign programs (resp., malware) in $S$ may be functionally equivalent to one another. 

\descr{Malware Obfuscation.} A program obfuscator $\mathcal{O}$ is a PPT compiler that takes as input a program $P$ and outputs a program $P'$. We call $P'$ the obfuscation of $P$ under $\mathcal{O}$.
\begin{definition}
\label{def:mal-obf-all}
Let $\mathcal{O}$ be a malware obfuscator. Let $D$ be a malware detector. Define: 
\[
\beta^{\mathcal{O}}_D = \Pr[D(M') = 0 \mid M' \leftarrow \mathcal{O}(M), M \in \mathsf{Malware}]
\]
and,
\[
\alpha^{\mathcal{O}}_D = \Pr[D(P') = 1 \mid P' \leftarrow \mathcal{O}(P), P \in \mathsf{Benign}]    
\]
We say that $\mathcal{O}$ evades detection from the malware detector $D$, if 
$\alpha^{\mathcal{O}}_D \leq \alpha_D$ implies $\beta^{\mathcal{O}}_D > \beta_D$.
\qed 
\end{definition}

In other words, $D$ is successful if it simultaneously demonstrates a decrease in the false positive rate and an increase in the true positive rate on the obfuscated versions of the set of programs. This rules out cases where $D$ might show a decrease in the false positive rate but at the expense of its true positive rate, i.e., $\alpha^{\mathcal{O}}_D \leq \alpha_D$ and $\beta^{\mathcal{O}}_D \leq \beta_D$. {Furthermore, a malware detector might put more emphasis on reducing one of the two types of errors. For instance, an antivirus product may weigh reducing false positives more than false negatives. In this case, the definition states that keeping one type of error fixed, if the resulting obfuscation results in further degradation of the other error rate, then  the obfuscation technique is considered successful.} Definition~\ref{def:mal-obf-all} is formed after the definition of the ideal distinguisher in~\cite[\S 11.7]{cover-elements}. 

\descr{Trivial Obfuscators and Conservative Detectors.} The above definition is lenient: it calls the obfuscator a successful one if it evades detection, on average. Thus, for some subsets of malware it might perform worse. If it performs worse for all malware, then obviously the said technique is useless. The definition weeds out the trivial ``identity'' obfuscator, which simply prints out its input program as the output. This obfuscator will necessarily have $\alpha^{\mathcal{O}}_D = \alpha_D$ and $\beta^{\mathcal{O}}_D = \beta_D$, and hence it does not evade detection. 
More importantly, the definition penalises any brute-force or highly conservative way of detecting malware, by stating that there will necessarily be a tradeoff in the form of increased false positives. Consider for instance the set $S$ of all programs that do not use a cryptographic library. Let $\mathcal{O}$ be an obfuscator, that encrypts some components of the program. Let $D_2$ be the following malware detector that uses the (non-trivial) malware detector $D$ as a subroutine:

\begin{algorithmic}
\State $D_2(P)$:
\If{$P$ imports a cryptographic library}
    \State Output 1
\Else
    \State $D(P)$
\EndIf 
\end{algorithmic}

Clearly, we have $0 = \beta^{\mathcal{O}}_{D_2} \leq \beta_{D_2}$. However, we also have $1 = \alpha^{\mathcal{O}}_{D_2} > \alpha_{D_2}$, since $D$ is non-trivial and before obfuscation, none of the programs import a cryptographic library. Thus, this obfuscator evades detection against $D_2$ under our definition (as it should). Another detector in this line of detectors is the one that flags any use of obfuscation as an indicator that the program may be malicious. This may very well be the case in the real-world: obfuscation may be more prevalent in malware than benign programs. However, since we are interested in the strength of the obfuscation technique, and the capabilities of any detector against an obfuscation technique, we have modelled our definition as a challenge to the detector to distinguish the same technique applied to a malware versus benign program. Furthermore, benign programs also use obfuscation techniques for a variety of reasons such as protection of intellectual property~\cite{hidden-str-obfs, nate-mesh}. 

\descr{Utility and Functional Equivalence.} From a malware author's point of view, the above definition does not say anything about the utility of the obfuscated malware $M'$. For instance, the obsfuscator could simply remove all ``malicious functionality'' in $M$, and therefore (rightfully) achieve a higher error rate $\beta^{\mathcal{O}}_D$, saying nothing about the detection capabilities of $D$. On the one extreme, we may have functional equivalence. That is, $M(x) = M'(x)$ for all inputs $x$. However, this requirement means that there is no obfuscator which can evade detection from a malware detector which solely bases its decision on blackbox input-output behaviour (e.g., dynamic analysis), provided that functional inequivalence of programs in $S$ can be checked in polynomial time. That is, given a polynomial number of inputs, one can check whether two programs in $S$ are functionally equivalent or not. In case of a pair of functionally inequivalent programs, one obtains at least one input-output pair in polynomial time in which their output differs. We prove this in the following.

\begin{proposition}
\label{prop:func-pres}
Let $\mathcal{O}$ be an obfuscator that preserves functionality. If functional inequivalence in $S$ can be checked in polynomial time, then there exists a non-trivial detector such that $\mathcal{O}$ does not evade detection from it.
\end{proposition}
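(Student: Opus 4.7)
The plan is to exhibit an explicit non-trivial detector $D^*$ that any functionality-preserving obfuscator fails to evade. Since the set $S$ is assumed to be of polynomial size, I can let $D^*$ hardcode the finite list $\mathsf{Malware} \subseteq S$. On input $P$, the detector iterates over each $M$ in its list, runs the polynomial-time functional equivalence procedure from the ``Functional Equivalence'' paragraph on the pair $(P, M)$, and outputs $1$ if $P$ is found to be functionally equivalent to some $M$ in the list, and $0$ otherwise. Because $|\mathsf{Malware}|$ is polynomial and each equivalence test is polynomial time, $D^*$ is a valid PPT detector.

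Next, I would compute the error rates of $D^*$ on the unobfuscated distribution. For $P \in \mathsf{Benign}$, the assumption that no benign program is functionally equivalent to any malware forces $D^*(P) = 0$, giving $\alpha_{D^*} = 0$. For $P \in \mathsf{Malware}$, $P$ is trivially functionally equivalent to the copy of itself in the list, so $D^*(P) = 1$ and $\beta_{D^*} = 0$. Hence $\alpha_{D^*} + \beta_{D^*} = 0 < 1$, and by the convention established after Proposition~\ref{prop:trivial} (see also Corollary~\ref{cor:non-trivial}), $D^*$ is non-trivial.

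It then remains to show that $\mathcal{O}$ fails to evade $D^*$. For any $M \in \mathsf{Malware}$ with $M' \leftarrow \mathcal{O}(M)$, functionality preservation gives $M'(x) = M(x)$ for all $x$, so the equivalence test against $M$ succeeds and $D^*(M') = 1$; thus $\beta^{\mathcal{O}}_{D^*} = 0 = \beta_{D^*}$. For any $P \in \mathsf{Benign}$ with $P' \leftarrow \mathcal{O}(P)$, if $P'$ were functionally equivalent to some malware $M$, then by transitivity $P$ would be as well, contradicting the assumed disjointness of $\mathsf{Benign}$ and $\mathsf{Malware}$; hence $D^*(P') = 0$ and $\alpha^{\mathcal{O}}_{D^*} = 0 = \alpha_{D^*}$. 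Therefore $\alpha^{\mathcal{O}}_{D^*} \leq \alpha_{D^*}$ while $\beta^{\mathcal{O}}_{D^*} \not> \beta_{D^*}$, violating the evasion condition of Definition~\ref{def:mal-obf-all}.

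The main obstacle is really conceptual rather than technical: it is whether $D^*$ is ``allowed'' to hardcode the labels of $S$. The model places no such restriction on a detector, and since $|S|$ is polynomial this hardcoding is resource-admissible, so the construction is legitimate. A secondary subtlety is that one must rely on the paper's stipulation that functional equivalence is decidable in polynomial time from a polynomial number of input/output probes; without this the detector $D^*$ may not be PPT, but the assumption is explicit in the preceding discussion and can be invoked directly.
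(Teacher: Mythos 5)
Your argument is internally consistent and, read against the letter of the model, it does establish the proposition: $D^*$ is PPT (given the paper's assumption that functional equivalence is decidable from polynomially many probes), it has $\alpha_{D^*}+\beta_{D^*}=0<1$ and is therefore non-trivial by Corollary~\ref{cor:non-trivial}, and by functionality preservation plus transitivity of functional equivalence its error rates are unchanged after obfuscation, so the implication in Definition~\ref{def:mal-obf-all} fails. It is, however, a genuinely different proof from the paper's, and the difference is exactly the point you flag yourself: your detector hardcodes the ground-truth partition of $S$ into $\mathsf{Malware}$ and $\mathsf{Benign}$, and therefore achieves \emph{zero} error already on the unobfuscated programs. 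The paper's proof deliberately avoids granting the detector this power: it starts from an arbitrary non-trivial detector $D$ whose existence is an explicit assumption (and which in general errs), records the labels that $D$ assigns to the programs in $S$, builds input--output fingerprints of those programs, and after obfuscation re-identifies each program by its fingerprint and replays $D$'s label, using a randomized tie-break (output $1$ with probability $j(i)/|S_i|$) for clusters of mutually equivalent programs that $D$ labelled inconsistently, so that $D$'s empirical error rates are reproduced exactly. What the paper's route buys is robustness and a stronger conclusion: functionality-preserving obfuscation cannot degrade the performance of \emph{any} given non-trivial detector, and the constructed $D'$ never needs to know which programs are actually malware. What your route buys is brevity, at the cost of leaning on ground-truth memorization, which yields a perfect detector for the fixed $S$; this is not literally excluded by the model (the paper only rules out a detector with zero error for \emph{all} sets $S$), but it sits awkwardly with the paper's framing, since if detectors could do this the assumption that a non-trivial detector exists would be vacuous and the remark that any $D$ gives a tradeoff would be false for every fixed $S$. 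If one reads the model as withholding the ground-truth labels from the detector---which the structure of the paper's own construction strongly suggests is intended---your $D^*$ is inadmissible and the fingerprinting argument is needed.
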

\begin{proof}
Let $D$ be a non-trivial detector for $S$ guaranteed by assumption. This means that $\alpha_D + \beta_D < 1$. The idea is to construct a detector $D'$ that runs $D$ over the (unobfuscated) programs in $S$, marks its decisions, and then creates input-output fingerprints for each program in $S$. After obfuscation is applied, since the programs remain functionally equivalent, $D'$ will be able to identify them through the input-output fingerprints, and hence retain the error rate of $D$ (as the labels are from $D$). The only subtlety is if some programs in $S$ are functionally equivalent, which we address in the following. 

We construct a detector $D'$ from the set $S$ of programs and labels, and the detector $D$ as follows. For each program in $S$, $D'$ runs $D$, and stores the label output by $D$. Next, at each round, $D'$ chooses as input $\{0, 1\}^q$, where $q$ is a polynomial. $D'$ then runs each program in $S$ on this input, storing the respective outputs.\footnote{If the programs are probabilistic, we assume that the input contains the input to the random tape, i.e., coin tosses~\cite[\S 7.1]{arora2009computational}.} After at most polynomially many inputs $Q$, the sequence of input-output pairs of each program labelled as benign is different from the sequence of input-output pairs of each program labelled as malware, according to the assumption on functional inequivalence. However, as noted before there may be some benign programs that are functionally equivalent to other benign programs. The same goes for malware programs. 

As long as $D$ labels them uniformly, there is no issue. However, if $D$ labels a subset of programs with the same input-output sequence differently, then $D'$ will not be able to distinguish between their obfuscated versions given only input-output sequences. Let $S_1, S_2, \ldots, S_t$ be subsets of benign programs from $S$, such that each program in $S_i$ is functionally equivalent, and for each $i$, not all programs in $S_i$ have the same label under $D$. Take one such $S_i$. Since each program in $S_i$ is benign, let $j(i)$ be the number of programs in $S_i$ mislabelled by $D$ as malware. Similarly we have a collection of subsets of malware programs such that within each subset they are functionally equivalent, but $D$ mislabels at least one of them as benign. Due to symmetry, the ensuing analysis is applicable to this case as well. Therefore, without loss of generality, we use the benign case only. Let $S(t) = S_1 \cup S_2 \cup \cdots \cup S_t$. Note that it does not matter if each set $S_i$ is truly a set of benign or malicious programs, as $D'$ is simply replicating the behaviour of $D$ on them. In other words, $D'$ does not need to know if $S_i$ is a set of benign or malicious programs.

After the programs in $S$ are run through $\mathcal{O}$, our detector $D'$ does as follows. Given any program $P$, it runs the program on the inputs from $Q$. If the input-output sequence identifies it as a program not in $S(t)$, $D'$ simply outputs the label previously stored from $D$. Otherwise, it identifies the set $S_i$ such that $P \in S_i$. Note that the programs in different $S_i$'s have different input-output ``fingerprints.'' After this, $D'$ returns the label 1 (malware) with probability $j(i)/|S_i|$, else it outputs the label 0 (benign). We now calculate the type-I errors of the two detectors. Since $D$ is run once by $D'$ (before obfuscation), let us calculate its \emph{empirical} type-I error $\alpha_D^{\text{Emp}}$ defined as:
\begin{align*}
    \alpha_D^{\text{Emp}} &= \frac{|D(P) = 1 \mid P \in \mathsf{Benign}|}{|\mathsf{Benign}|} \\
    & = \frac{|D(P) = 1 \mid P \in \mathsf{Benign}, P \notin S(t)|}{|\mathsf{Benign}|} \\
    & + \frac{|D(P) = 1 \mid P \in \mathsf{Benign}, P \in S(t)| }{|\mathsf{Benign}|}\\
    & = \frac{j + \sum_{i = 1}^t j(i)}{|\mathsf{Benign}|},
\end{align*}
where $j$ is the number of benign programs not in $S(t)$, mislabelled by $D$. Now, for $D'$ we have:
\begin{align*}
    \alpha^{\mathcal{O}}_{D'} &= \Pr[D'(\mathcal{O}(P)) = 1 \mid P \in \mathsf{Benign}] \\
        &= \Pr[D'(\mathcal{O}(P)) = 1 \mid P \in \mathsf{Benign}, P \notin S(t)] \\
        &\times \Pr [P \in \mathsf{Benign}, P \notin S(t)] \\
        &+ \Pr[D'(\mathcal{O}(P)) = 1 \mid P \in \mathsf{Benign}, P \in S(t)] \\
        &\times \Pr [P \in \mathsf{Benign}, P \in S(t)] \\
        &= \frac{|D(P) = 1 \mid P \in \mathsf{Benign}, P \notin S(t)|}{|\mathsf{Benign}| - |S(t)|} \\
        &\times \frac{|\mathsf{Benign}| - |S(t)|}{|\mathsf{Benign}|} 
        + \sum_{i = 1}^t \frac{|S_i|}{|\mathsf{Benign}|} \cdot \frac{j(i)}{|S_i|} \\
        &= \frac{j}{|\mathsf{Benign}|} + \frac{\sum_{i = 1}^t j(i)}{|\mathsf{Benign}|} = \alpha_D^{\text{Emp}}
\end{align*}

Thus, $D'$ reproduces $D$'s type-I error. A similar analysis holds for type-II error, i.e., $\beta^{\mathcal{O}}_{D'}$. Hence, we conclude that $D'$ is non-trivial and hence $\mathcal{O}$ cannot evade detection from it. 
\end{proof}

We reiterate that the theorem only holds if programs in $S$ can be checked to be functionally inequivalent in polynomial time. This is needed to check input-output behaviour on the same set $S$ of programs before and after obfuscation. Functional equivalence in general is an undecidable problem. Thus, the statement does not hold for general programs. However, such a guarantee is outside the control of the obfuscator $\mathcal{O}$, and only the property of the set $S$. We are instead interested in the capabilities of the obfuscator in making malware programs evade detection without relying on the specific set $S$.
In light of the above proposition, we have a milder (and realistic) requirement that there is some target subset $X$ of inputs $x$, for which we have $M'(x) = M(x)$. Given this we have the following definition of utility.

\begin{definition}[Utility]
\label{def:util}
For a program $P$, let $X(P)$ denote a target set of inputs; a subset of the domain of $P$. An obfuscator is useful if for all programs $P$ and target input sets $X(P)$, we have $P'(x) = P(x)$ for all $x \in X(P)$. If $X(P)$ is exactly the input domain of $P$, we say that the obfuscator preserves functionality.
\end{definition}

The above definition means that for some inputs (not in the target set), the obfuscated malware $M'$ might not even behave maliciously. However, there should be at least one input on which the program exhibits malicious behaviour for the program to be labelled as a malware. This naturally models the use of environmental variables to identify target machines on which the malware is supposed to run (as opposed to every machine)~\cite{riordan-environ}. We will discuss this technique in Section~\ref{sub:envir}. 

\descr{On the Obfuscator.} The obfuscator itself is benign, in the sense that if we run the obfuscator on any benign program, the resulting obfuscation is not malicious (in the global sense, irrespective of the point of view of any detector). The utility aspect is important, because in principle we could write a malware which evades detection against any (polynomial-time) detector as we shall show in Section~\ref{sub:construct}. This essentially means that the malware may never run its functionality on its target. 

\section{Non-Evasive Obfuscation Techniques}
\label{sec:non-evasive}
We first summarise some techniques that do not evade detection in our model. The main reason being that the decryption routine is shipped with the program. The obfuscation technique is simple: encrypt part(s) of the malware using an encryption algorithm. The encrypted components are decrypted at run-time. Thus, such obfuscation (in principle) is detectable both via static and dynamic analysis, since detection amounts to finding the decryption routine within the program, running it to decrypt the encrypted components and analyzing them in the clear. In our model, this would mean that the error of the detector will remain the same after such obfuscation is applied. Since these are the predominant techniques used by real-world malware authors~\cite{MAIORCA201516}, we nonetheless enlist some of them. Example usage of these techniques are in string encryption~\cite{hidden-str-obfs, wermke-obfs-google-play}. Here, only the string (such as a URL) is encrypted. Once again, the decryption key is provided within the program so that the string can be decrypted at run-time. Similarly, they can be used in class encryption: e.g., DEX file encryption~\cite{wermke-obfs-google-play}: this again suffers from the same aforementioned problem. These techniques in essence are similar to the use of ``packing'' in malware, which uses compression to evade detection. Needless to say that decompression is done at run-time.

\subsection{Base-64 Encoding}
\label{subsub:base64}
Base64 encoding is an encoding scheme that converts binary data into text. A non-malicious use of Base64 encoding is sending images over email. Base64 encoding is also frequently used to obfuscate malware, e.g., to obfuscate file names and file content~\cite{apvrille-crypto}. This is related to cryptography only at a rudimentary level, as Base64 encoded text, once detected, can readily be decoded. This is obvious, since this is merely an encoding scheme and not an encryption scheme. The following is an example of Base64 encoding in JavaScript taken from~\cite{apvrille-crypto}:
\begin{verbatim}
this.jdField_b_String 
= a(b("L1RodW1icy5kYg=="));
\end{verbatim}
Here, \verb+b()+ Base64 decodes the input string, and \verb+a()+ reads the resource. The decoded string is \verb+Thumbs.db+ which itself contains Base64 encoded malicious data~\cite{apvrille-crypto}.

\subsection{XOR Obfuscation}
The underlying idea behind XOR obfuscation is its {potential to be used as a one-time pad}. Given a plaintext represented in bits, if it is XORed with a random key of equal length, then the resulting encryption is perfectly secure. If the plaintext represents a malware, then one can encrypt the malware by using a key of equivalent length. However, this requires huge key sizes. As a result, malware obfuscators normally reuse a short key, e.g., a single byte, and encrypt equivalent sized blocks. The use of a single byte is known as single-byte XOR encryption~\cite[\S 13]{prac-mal-analysis}. Due to short keys used in XOR encryption based obfuscation, there are various tools that can deobfuscate the program or find the key. These tools look for expected text at a given location in the program, in an attempt to find the key (e.g., PE files)~\cite[\S 13]{prac-mal-analysis}. Finding the key does not mean that these detectors necessarily try to look for the decryption routine within the malware to find the key~\cite[\S 13]{prac-mal-analysis}. 
For a list of few other notable schemes that belong to the this category of encryption using short keys, we refer the reader to~\cite[\S 13]{prac-mal-analysis}.

\subsection{Obfuscation with Stronger Encryption}
Many obfuscation tools also offer obfuscation via block ciphers such as AES, DES and TEA~\cite{hidden-str-obfs, aligot}. However, even though block ciphers such as AES in an appropriate mode can provide strong encryption, their use in malware obfuscation is rudimentary in our model. The decryption routine together with the key is normally part of the code which obfuscates the code~\cite{malware-authors-dont-learn, hidden-str-obfs}. Note that these ciphers are also employed to encrypt traffic between the malware and a remote server (command and control traffic)~\cite{aligot}. But once again since the decryption keys are hardcoded, the traffic can be decrypted to see what information is being exchanged~\cite{malware-authors-dont-learn}. One could also use asymmetric encryption algorithms, e.g., RSA, to generate session keys to encrypt communication as in the case of the Waledac malware family~\cite{malware-authors-dont-learn}. The RSA public-private key is generated at run-time by the malware, and hence the private key is also readily available to the detector (via dynamic analysis). For a detailed account of block ciphers used by commercial obfuscation tools to obfuscate Android and Java apps in the manner described above, please see~\cite{hidden-str-obfs}.

\subsection{Use of Hash Functions}
At first glance, it is not clear how cryptographic hash functions could be used in obfuscation, as their one-wayness property implies that deobfucation would be impossible even for the malware itself. The use of hash functions such as MD5 and SHA in malware have been documented for generating unique identities of command and control bots such as in the Waledac malware family~\cite{hidden-str-obfs, aligot}. This is obviously not an application of obfuscation. However, as we show next, one can use hash functions to construct a scheme that provably evades detection in our model, albeit with an unavoidable utility tradeoff.  

\section{Towards Evasive Techniques: The Hash-then-Decrypt Construct}
\label{sub:construct}
We now turn our attention to obfuscation techniques that evade detection (under our model). The construct given in this section is derived from a similar construct based on environmental variables which first appears in~\cite{riordan-environ}. {The construct given in this section is theoretical in nature as it is not useful according to our definition. However, we present it here because its security is the basis of a practical construct based on environment variables to be presented in the next section.} Let $\text{Enc}$ and $\text{Dec}$ be the encryption and decryption functions of a semantically secure symmetric key encryption scheme. Let $H$ be a cryptographic hash function. Given the program $P$ and its block $B$, to be obfuscated, the obfuscator $\mathcal{O}$ does the following:
\[
 k \leftarrow \text{random key}, \llbracket B \rrbracket \leftarrow \text{Enc}_k(B), \, h_k \leftarrow H(k),
\]
where the key is selected from the key space of the symmetric encryption scheme. The obfuscator then creates the program $P'$, which is the same as $P$ except that $B$ is replaced by the block:

\begin{algorithmic}
\State $x \leftarrow \text{KeyFinder()}$
\If{$H(x) = h_k$}
    \State $B' \leftarrow \text{Dec}_x(\llbracket B \rrbracket)$ 
    \State $B'$
\EndIf 
\end{algorithmic}
Here the routine KeyFinder() simply samples a random key from the key space. We now discuss the security and utility of this obfuscator. 

\begin{proposition}
\label{prop:ran-key-sec}
If the symmetric key cryptosystem is semantically secure, then the random key-based hash-then-decrypt obfuscator $\mathcal{O}$ evades detection against any detector $D$ in the random oracle model.
\end{proposition}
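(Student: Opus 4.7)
The plan is to show that after obfuscation the output distribution becomes essentially independent of the benign/malware label, so that $\alpha^{\mathcal{O}}_D + \beta^{\mathcal{O}}_D$ is negligibly close to $1$. Combined with the bound $\alpha_D + \beta_D < 1$ from Corollary~\ref{cor:non-trivial}, this precludes the two inequalities $\alpha^{\mathcal{O}}_D \leq \alpha_D$ and $\beta^{\mathcal{O}}_D \leq \beta_D$ from holding simultaneously, giving evasion in the sense of Definition~\ref{def:mal-obf-all}. The overall method is a standard reduction: assume a detector $D$ that refutes evasion, then extract either an IND-CPA attacker against the symmetric scheme or a distinguisher that contradicts the random-oracle abstraction.

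I would proceed via a two-step hybrid argument. Hybrid $\mathsf{H}_0$ is the real obfuscator. In $\mathsf{H}_1$ the ciphertext $\llbracket B \rrbracket = \text{Enc}_k(B)$ is replaced by $\text{Enc}_k(0^{|B|})$: since $k$ is freshly sampled and appears in the transcript only through $h_k$, any distinguishing advantage translates directly into an IND-CPA advantage against the scheme. In $\mathsf{H}_2$ the hash $h_k$ is replaced by an independent uniform string. In the random-oracle model this is statistically close to $\mathsf{H}_1$: $D$, together with any execution of the embedded KeyFinder routine, makes only polynomially many queries to $H$; each such query is on a value independent of the uniformly sampled $k$, so by a union bound the probability that $k$ is ever queried is at most $q/|\mathcal{K}|$, and absent that event $H(k)$ is information-theoretically uniform. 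In $\mathsf{H}_2$ the ciphertext and hash are sampled independently of $B$, and the decryption wrapper is a universal code template, so the distribution of the obfuscation depends on the source program only through the unobfuscated wrapper. By the modelling assumption that $B$ is the block containing the classifier-relevant content, this wrapper is the same for benign and malware inputs under consideration. Hence $\alpha^{\mathsf{H}_2}_D + \beta^{\mathsf{H}_2}_D = 1$, and the hybrid chain transports this up to $\alpha^{\mathcal{O}}_D + \beta^{\mathcal{O}}_D \geq 1 - \nu$ for some negligible $\nu$, giving the required contradiction.

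The main obstacle I expect is bridging between cryptographic indistinguishability—a statement about two related samples from the same underlying program—and the evasion definition, which averages $D$ over two different distributions ($\mathsf{Benign}$ and $\mathsf{Malware}$). Making this rigorous forces one to fix how the blocks $B$ for benign programs relate to those for malware so that $\mathsf{H}_2$ yields the same distribution in both cases; the cleanest resolution is to stipulate that the obfuscator targets precisely the block carrying the distinguishing content, with a fixed wrapper code template common to both classes. A secondary bookkeeping point is that the KeyFinder routine itself queries $H$ during any dynamic analysis, but this is harmless because each query is on a freshly sampled candidate key, so the union bound on the event ``$k$ is ever queried'' remains negligible provided the key space is super-polynomial in the security parameter.
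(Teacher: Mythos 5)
Your proposal is correct and matches the paper's argument in substance: both reduce to the semantic security of the symmetric scheme in the random oracle model, under the same modelling assumption that the benign and malware programs are identical outside the encrypted block, and both derive the contradiction from $\alpha^{\mathcal{O}}_D + \beta^{\mathcal{O}}_D$ being (negligibly close to) $1$ while the non-trivial detector satisfies $\alpha_D + \beta_D < 1$. The only difference is packaging: the paper runs a direct left-or-right reduction on a two-program set $S = \{P, M\}$ (the adversary submits $m_0 = B_P$, $m_1 = B_M$, embeds the challenge ciphertext, simulates the random oracle, outputs $D$'s bit, and computes $\Pr[b'=b] = 1 - \tfrac{1}{2}(\alpha^{\mathcal{O}}_D + \beta^{\mathcal{O}}_D)$), whereas you reach the same quantity via a real-or-random hybrid chain --- an equivalent formulation of the same reduction.
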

\begin{proof}
Let $S$ be a set containing two programs: one benign and other malware, denoted $P$ and $M$, respectively. We assume that $P$ and $M$ are identical except for the blocks $B_P$ and $B_M$ in $P$ and $M$, respectively. This implies that $M$ is benign if the block $B_M$ is replaced with $B_P$. Let $D$ be a non-trivial detector achieving errors $\alpha_D$ and $\beta_D$, respectively, with both strictly less than 1. Let $\mathcal{O}$ denote the above obfuscator. Let $\mathcal{A}$ denote the semantic security adversary. Adversary $\mathcal{A}$ chooses $m_0 = B_P$ and $m_1 = B_M$ as its choices of the two messages.\footnote{We assume they are of the same length. Otherwise, we pad the shorter program with more benign code, e.g., print statements.} When $\mathcal{A}$ receives $\llbracket m_b \rrbracket$, it inserts this as $\llbracket B \rrbracket$ in the program. Adversary $\mathcal{A}$ then gives the resulting program $P'$ to $D$. Whenever $D$ makes a hash query, the adversary samples a uniform random string and gives it to $D$. For repeat hash queries, the adversary returns the same random string as previously sent. Let $b'$ be the bit returned by $D$. Adversary $\mathcal{A}$ outputs $b'$ as its guess. It is clear that the adversary perfectly simulates the obfuscation except for negligible probability (e.g., in case $D$ finds a collision). Now, assume that the obfuscator does not evade detection against $D$. This means that from Definition~\ref{def:mal-obf-all}, we have simultaneously: 
\[
\alpha^{\mathcal{O}}_D \le \alpha_D \text{ and } \beta^{\mathcal{O}}_D \le \beta_D.
\]
Then, except for negligible probability, we have:
\begin{align*}
    \Pr[ b' = b ] &= \Pr[b' = 1 \mid b = 1] \Pr[b = 1] \\
    &+ \Pr[b' = 0 \mid b = 0] \Pr[b = 0] \\
    &= (1 - \alpha^{\mathcal{O}}_D) \cdot \frac{1}{2} + (1 - \beta^{\mathcal{O}}_D) \cdot \frac{1}{2} \\
    &\geq (1 - \alpha_D) \cdot \frac{1}{2} + (1 - \beta_D) \cdot \frac{1}{2}\\
    &= 1 - \frac{1}{2}(\alpha_D + \beta_D) > \frac{1}{2},
\end{align*}
where the last step follows due to the non-triviality of $D$, i.e., $\alpha_D + \beta_D < 1$ (Corollary~\ref{cor:non-trivial}). This is clearly a contradiction, as $\mathcal{A}$ breaks the semantic security of the cryptosystem.
\end{proof}

Unfortunately (or fortunately), this obfuscator is not useful. 
\begin{proposition}
\label{prop:ran-key-util}
The random key-based hash-then-decrypt obfuscator $\mathcal{O}$ is not useful.
\end{proposition}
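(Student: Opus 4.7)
The plan is to exhibit a program $P$ and a target input set $X(P)$ on which the obfuscation $P' = \mathcal{O}(P)$ fails to reproduce $P$'s behaviour. Choose $P$ to be any program whose output on some input $x^\ast$ genuinely depends on the block $B$---for instance, a program that simply computes $B$ and returns its value---and take $X(P) = \{x^\ast\}$. Under Definition~\ref{def:util}, establishing usefulness would require $P'(x^\ast) = P(x^\ast)$; I will show the opposite holds with overwhelming probability over the coins used by $\mathcal{O}$ and $P'$.

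First I would trace the execution of $P'$ on input $x^\ast$. The obfuscated block begins by invoking KeyFinder(), which returns a uniformly random element $x$ of the key space $\mathcal{K}$, drawn independently of the key $k$ that was fixed at obfuscation time. The decrypted block $B'$ is then executed only inside the conditional, which fires when $H(x) = h_k$. Since $x$ is independent of $k$, the probability of this equality is at most $1/|\mathcal{K}|$ plus the collision probability of $H$, both of which are negligible for any realistic key length and cryptographic hash function (e.g.\ of order $2^{-128}$). Consequently, with overwhelming probability the if-branch is skipped entirely, and $P'(x^\ast)$ is produced by running $P$ with the block $B$ effectively replaced by a no-op.

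I would then conclude that, by the choice of $P$, this output differs from $P(x^\ast)$, so
\[
\Pr[P'(x^\ast) \neq P(x^\ast)] \geq 1 - \frac{1}{|\mathcal{K}|} - \mathrm{negl}(\lambda),
\]
which is overwhelmingly close to $1$. Hence there exist a program and a target input set violating the condition in Definition~\ref{def:util}, so $\mathcal{O}$ is not useful. The only subtlety is that $P'$ is randomised, so ``$P'(x) = P(x)$'' must be interpreted carefully; but the bound above defeats any reasonable reading---strict equality, equality with overwhelming probability over the obfuscator's coins, or equality of output distributions---so no real obstacle arises. The proof is essentially a direct unfolding of the KeyFinder semantics.
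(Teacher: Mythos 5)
Your proof is correct and follows essentially the same route as the paper's: you observe that KeyFinder draws a key independent of $k$, so the hash check $H(x)=h_k$ passes only with negligible probability, the block $B$ is never executed, and hence $P'(x)\neq P(x)$ on a target input. The only cosmetic difference is that you exhibit a single witness program and input, whereas the paper argues for an arbitrary $M$ and applies a union bound over all of $X(M)$; both suffice to contradict Definition~\ref{def:util}.
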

\begin{proof}
Pick any input $x \in X(M)$ in the target set of inputs. On this input, necessarily (unobfuscated) $M$ runs the malicious block $B$ (otherwise it is not the target set of inputs). On the other hand, on the same input, the probability that the obfuscated version $M'$ will run the block $B$ is negligible (in the size of the key space). Through the union bound, over polynomially many inputs in $X(M)$, the probability that $M'$ will run the block $B$ remains negligible. Hence with overwhelming probability $M'(x) \neq M(x)$ for all inputs in $X(M)$, and hence the malware obfuscator is not useful.
\end{proof}


\section{Environmental Keying}
\label{sub:envir}
While the obfuscated malware construction in Section~\ref{sub:construct} is not useful (according to our definition), variants of it have appeared in theory and practice. On the one extreme, we have the cryptographic key uniformly at random selected from a key space, which means that the decryption routine cannot be run in feasible time, by the malware as well as the detector. On the other extreme, the malware writer could provide the key within the obfuscated program. However, under our model, this technique is not secure, as it is a matter of time before some detector will be able to detect keys.\footnote{This is one of the reasons why XOR-based obfuscation techniques are easily detected and de-obfuscated by common program analysis tools.} In between these two extremes we have a spectrum of difficulties: 
\begin{enumerate}
    \item Use of a random cryptographic key so that the KeyFinder routine just takes long enough to evade dynamic analysis~\cite{oxpat}. This means use of smaller key-sizes than is considered cryptographically secure. This also includes use of other time-consuming options, e.g., hashing multiple times~\cite{oxpat}. However, for the obfuscated malware to be useful, the KeyFinder should output the key in feasible time. Hence, this technique does not evade detection as the correct key will be found on all machines (the obfuscated malware is functionally equivalent).
    \item Instead of embedding it in the hash, the key can be retrieved from an external source, e.g., a server~\cite{oxpat, hash-cond, prac-mal-analysis}. {Note that in this case a hash function is not used at all.} However, the link to the server itself may disclose maliciousness, e.g., a known malicious domain. On the other hand, there are examples of malware who retrieve keys from trusted hosts, such as GitHub, Dropbox, and Google Docs, thus evading such reputation-based detection~\cite{cisco-2018, anomali-rise, sean-sophos}). Furthermore, a growing trend in malware is communicating over TLS/SSL with remote command-and-control hosts~\cite{anomali-rise}. Encrypted communication could also be used to download part of the malware hosted on mainstream cloud platforms, such as GitHub, which may not raise suspicion~\cite{sean-sophos}. However, if a trusted host is used to host the key and the key is communicated to the malware via an encrypted channel, the detector will still be able to label the program as malware using dynamic analysis, as it will fetch the key on any machine. {We also note that it is still possible to detect malicious intent even if the traffic is encrypted by using machine learning algorithms trained on features extracted from encrypted benign and malware traffic such as the size of the network flow~\cite{shekhawat2019encryptedmaltraffic}}. Thus, such techniques only delay the inevitable. 
    \item By far the most resilient technique is environmental keying~\cite{riordan-environ}. Instead of brute-forcing the key, the program constructs the key from the environment variables (e.g., user name). The main point being that the malware would run (after a mini-bruteforce search) on target computers (whose environment variable values are known beforehand). However, on non-target machines, the KeyFinder routine may not terminate as the environment variables never take on the pre-determined values ``hidden'' within the hash digest. 
\end{enumerate}

The technique of environmental keying has been known at least since 1998~\cite{riordan-environ}. We present a detailed account of this technique in the hash-then-decrypt construct in the related work section (Section~\ref{sub:rw}). Here we look at the use of environmental keying with the help of an example.

\subsection{The Ebowla Framework: Case Study}
\label{subsub:ebowla}
We consider the Ebowla framework~\cite{ebowla} for encrypting malware payloads via environmental keying which has been presented at several white hat offensive security conferences. As mentioned earlier, environmental keying uses values stored in ``environment variables'' to derive the key. The idea being that the attacker knows the values taken by these environment variables in his/her target machines, and hence reproducing the key will be faster on the target machines versus other machines (including those employed by malware detectors as sandbox environments). The framework describes the following environment variables, although it leaves room for writers to define more variables:
\begin{itemize}
    \item Environmental variables: e.g., username, user domain, computer name, and number of processors.
    \item Path variables: e.g., \verb+C:\Windows\temp+. This includes a starting location, e.g., \verb+C:\Windows+.
    \item External IP ranges, e.g., \verb+100.0.0.0+.
    \item System time range with the grarnualrity of year, month or a particular day, e.g., 20210000, 20211200, 20211201.
\end{itemize}
One or more of these variables, called tokens, are then used to create the key $k$ in the hash-then-decrypt construct. An important consideration here is how much logic is pre-built in the hash checking part to ensure a reasonable tradeoff between key reconstruction by the malware, versus by the detector. For instance, if the key is derived by constructing one variable each from the above mentioned variables, and the hash checking routine constructs values for these tokens in sequence, then this reduces the \emph{entropy} of the key space, versus if the hash checking routine needs to check all possible combinations (order of concatenation). We shall return to the entropy of the key space in environmental keying shortly. For now, we focus on the particular version of hash-then-decrypt construct, employed in the Ebowla framework. The obfuscation routine is as follows:
\begin{align*}
    k &\leftarrow H(\text{environmental key}), \llbracket B \rrbracket \leftarrow \text{Enc}_k(B), \\
    & h_B \leftarrow H(B),
\end{align*}
where as before we assume that $B$ is the malicious block that needs to be obfuscated (the so-called payload). The obfuscated program $P'$ then contains the following block instead of $B$:
\begin{algorithmic}
\State $x \leftarrow \text{KeyFinder}(\text{environment variables})
$
\State $B' \leftarrow \text{Dec}_x(\llbracket B \rrbracket)$
\If{$H(B') = h_B$}
    \State $B'$
\EndIf 
\end{algorithmic}
where 
\begin{algorithmic}
\State $\text{KeyFinder}(\text{environment variables})$:
\State environmental key $\leftarrow$ Concatenate values of environment variables 
\State \Return $H(\text{environmental key})$
\end{algorithmic}

Note that the above technique can be used both by a malware (to obfuscate its malicious payload) or a benign program, e.g., to check if only the rightful user is able to run the protected part of the program~\cite{nate-mesh}. Two main differences between this hash-then-decrypt construct and the one discussed in Section~\ref{sub:construct} are that (a) the key is obtained as the hash of the constructed environmental variable string -- this is obviously done to increase entropy of the key, and (b) the hash of the block $B$, i.e., $h_B$, is also provided in the program $P'$. This second difference, however, means that in our model the obfuscator does not evade detection. The proof is simple: any non-trivial detector $D$ on the unobfuscated set of programs $S$ can keep hashes of the benign and malicious blocks. Recall that the set $S$ is of polynomial size. Thus, the detector retains its advantage (type-I and type-II errors) over the obfuscated variants of the two types of programs. In practice, what this means is that any detector that keeps signatures of past programs will be able to detect this obfuscation. To avoid this, the authors suggest that not all of the block be used for hashing (by using an offset, e.g., by discarding the last few bytes before hashing). This can then evade pre-computed signatures. However, the offset is part of the decryption routine, and hence can be used by a detector to recompute hashes for malware that have already been detected. This ``vulnerability'' can be removed if instead we use the following routine based on the hash-then-decrypt construction of Section~\ref{sub:construct}. Namely, the obfuscator first constructs:
\begin{align*}
    k &\leftarrow H(\text{environmental key}), \llbracket B \rrbracket \leftarrow \text{Enc}_k(B), \\
    & h_k \leftarrow H(k),
\end{align*}
and then replaces the block $B$ with:
\begin{algorithmic}
\State $x \leftarrow \text{KeyFinder(\text{environment variables})}$
\If{$H(x) = h_k$}
    \State $B' \leftarrow \text{Dec}_x(\llbracket B \rrbracket)$ 
    \State $B'$
\EndIf 
\end{algorithmic}
Notice the double application of the hash function, so that $h_k = H(k) = H(H(\text{environmental key}))$. The program does not store $k$. Only $h_k$ and the encrypted block $\llbracket B \rrbracket$ are hardcoded into the program. 

\descr{Remark.} The construction from Ebowla shown above is a simplification of the one shown in~\cite{ebowla}, as we ignore compression and encoding of the encrypted block, which is likely there to increase portability of the payload, rather than for security reasons. One important aspect however is how the environment key is derived and how it is reconstructed. 

\subsection{Finding the Environmental Key -- Target vs Non-Target Machines}
The main idea behind environmental keying is that finding the key should be easy on the target machines, but hard if the detector does not know the target machines. There is an implicit assumption that the environmental variables used are sufficiently unique (high entropy) among different machines. This problem makes more sense if the attacker's target is a much smaller set of machines than the total pool of machines. Without knowing the identity of the target machines (and hence the state of their environment variables), a detector's only choice may be to run a brute-force search on the environment variables used. 

Translated to our formal model, the detector $D$, given the obfuscation $P'$ of a program using the environmental-key based hash-then-decrypt obfuscation technique of Section~\ref{subsub:ebowla}, would like to flag it as a malware or benign program. The detector has no advantage in classifying it as one or the other without finding the environmental key. To model environmental variables and their values taken up by machines, we consider a universal set $\mathcal{C}$ of all environmental values taken up by any machine. The set is bestowed with a probability distribution, modeling the probability that a given environmental variable can take on a specific value. We assume this distribution to be public, and hence also known to the detector. At each time step, we assume that the detector can sample one profile at a time and hence potentially run the given obfuscated program on this environmental profile.\footnote{This can of course be generalized to polynomially many profiles.} Let $\mathcal{T}$ be the target set of profiles from $\mathcal{C}$. That is, those profiles, on which the environmental key matches the hash, and therefore decryption is successful. Let $p_{\mathcal{T}}$ denote the probability of sampling such profiles. We assume that these profiles are hardwired into $\mathcal{O}$ (via the hash digest), and hence the same target profiles work for all programs obfuscated by $\mathcal{O}$.  

We are interested in finding:
\[
\beta^{\mathcal{O}}_D = \Pr[D(M') = 0 \mid M' \leftarrow \mathcal{O}(M), M \in \mathsf{Malware}]    
\]
and,
\[
\alpha^{\mathcal{O}}_D = \Pr[D(P') = 1 \mid P' \leftarrow \mathcal{O}(P), P \in \mathsf{Benign}],    
\]
where we assume that $D$ is given access to $\mathcal{C}$. Furthermore by assumption, $D$ is non-trivial, i.e., $\alpha_D + \beta_D < 1$.  
The observation is that if $p_\mathcal{T}$ is a non-negligible function in the size of the profiles, then the detector will find the key in polynomially many samples.\footnote{We use the usual definition of a negligible function, i.e., one that grows slower than the reciprocal of any polynomial.} In this case, both error rates are $\beta^{\mathcal{O}}_D = \beta_D$ and $\alpha^{\mathcal{O}}_D = \alpha_D$, and hence the obfuscator does not evade detection. On the other, hand, if $p_\mathcal{T}$ is a negligible function in the size of the profiles, then the detector cannot find the key in polynomially many samples. The detector then outputs $1$ with some fixed probability $p$, else it outputs $0$. This implies that $\alpha^{\mathcal{O}}_D + \beta^{\mathcal{O}}_D = (1 - p) + p = 1$, regardless of $p$. Combining with $\alpha_D + \beta_D < 1$, we see that if $\alpha^{\mathcal{O}}_D \le \alpha_D$, then necessarily $\beta^{\mathcal{O}}_D > \beta_D$, and hence $\mathcal{O}$ evades detection against $D$. 

Thus, from the malware author's point of view, the goal is to ensure that $p_\mathcal{T}$ is negligible. One way to achieve this is to ensure that the set of profiles $\mathcal{C}$ has high entropy. That is, there is a large number of possible values that can be taken up by the accessed environmental variables across all machines.

\descr{On the Key Finder Routine.} The Key Finder routine takes as input a set of environment variables and ``extracts'' a string which is the purported key to be used in the hash-then-decrypt construct. The use of the word extract is intentional, as this may not be a simple concatenation of the environment variables. For instance, in the Ebowla framework, the Key Finder routine starts by loading the current value in the \texttt{PATH} variable, and then traverses the file system from this value (outputting the current value at each point). Note that we are assuming that the key finding routine is itself not obfuscated. We will return to the case when this routine may be obfuscated as well in Section~\ref{sub:formal-crypto}. For now, we assume it to be in the clear. This means that the malware obfuscator cannot hide the environmental variable(s) used to extract the key, as well as the entire routine.




\descr{Environment Variables and Entropy.} We can define environment variables as showing the current state of the system. They can be categorized into two main types:
\begin{enumerate}
    \item \emph{Time invariant:} These remain static over time. Examples include: \texttt{USER}, \texttt{HOME}, \texttt{PATH}, and IP address. 
    \item \emph{Time variant:} These change over time. A simple example is current system time.
\end{enumerate}
For entropy, an important consideration is how many computers take on a particular value of an environment variable. Consider a universe of $n$ computers. Let $A$ be an environment variable, and let $N(A = a)$ denote the number of computers having the state $A = a$ (having their environmental variable $A$ set at $a$). Denote by $P_A(a) = N(A = a)/n$ as the empirical probability of having the environment variable $A$ set to $a$. Then, we can see that if $A$ is the system time, and $a$ is the current time, then $P_A(a) = 1$. On the other hand if $A$ is the username (the variable \texttt{USER}) then $P_A(a)$ is close to 0. 
A key difficulty is to have estimates of these empirical probabilities in the real-world, as there is a lack of datasets due to obvious privacy concerns. However, we can guess the entropy through what is known about similar variables via other means. For instance:
\begin{itemize}
    \item \textit{User:} In many organizations, computer login names are a combination of the first and the last name to ensure uniqueness across the organisation. In the US alone, there were 6,299,033 unique surnames according to the 2010 US Census~\cite{surnames-2010}. A different study curated a list of 4,250 unique first names~\cite{firstnames}. The combination of these two alone gives a total space of $\approx 2^{35}$. This is arguably a crude lower bound as it is only confined to the US. Furthermore, computer names are more involved, as they may include abbreviations and/or additional characters. 
    However, if the malware is targeting a group of users with similar user names, then the search space reduces drastically. As we have mentioned earilier, the malware will not be able to hide the fact that it is extracting a certain subset of user names.
    \item \textit{IP Addresses:} There are a total of $2^{32}$ IP addresses under IPv4. The use of IPv6 will further increase this space. Malware authors might be interested in a certain IP prefix, in which case the space is reduced.
    \item \textit{System Time:} Time-based activation may start based on the granularity of a second, a minute, an hour, a day, a week, or a year. This gives a granularity of about $2^{25}$ a year. Once again, broadening the range of time when the malware is triggered reduces the space.
\end{itemize}
Thus, a combination of all three increases the search space for the detector. But this increase in entropy comes at a tradeoff: the higher the entropy, the more targeted the attack, and the less likely it will be detected by the detector. On the other hand, the lower the entropy, the wider the net cast by the malware, but the easier it is for the detector to identify it as malware. 


\descr{Time on Non-Target Machines.} Given that the Key Finder routine is in the clear, is it possible to make the routine take longer on non-target machines without making it take longer on the target machine regardless of the number of targets? For instance, in the Ebowla framework described above, the code traverses the file system to extract strings, as well as tries all combinations of the different environment variables. Obviously, this is done on both the target and other machines, and hence the excess time also penalizes the target machine. The answer to this question is mixed. Consider the following construct:

\begin{enumerate}
    \item Denote by ${a}$ the string obtained by concatenating the values of one or more environmental variables in the target computer (denoted by $A$).
    \item Let $m$ be a large positive integer. Pick a \emph{small} number $i$ from $\{1, 2, \ldots, m\}$.
    \item The environmental key is $H(a || i)$.
    \item The Key Finder routine is then: concatenate the values of the environment variables $A$, and concatenate it with $j = 1, 2, \ldots, m$ until a hash match is found.
\end{enumerate}

The idea here is that on any machine that does not exhibit $a$ as the concatenated value of the corresponding environment variable, the Key Finder routine will likely run through all values until $m$, by which time the detector will discard the machine as not being the target. If $m$ is large enough, this can be substantial time. On the other hand, $i$ can be chosen to be small so that Key Finder returns the key after only going till $i$ on the target machine. A drawback of this construct is that the detector might only run the Key Finder for a feasible number of steps before discarding the current machine as not being the target machine (as the intention of the malware writer would not be to run indefinitely on the target machine). Even if we make $m$ exponentially large, if the number of possible environmental values are small (low entropy, or in other words, the size of profiles $\mathcal{C}$ is polynomially bounded), then the detector can run the Key Finder routine in parallel for several profiles at a time. In our Turing machine language, this means using dovetailing~\cite{lovasz-dovetail} to output the key in polynomial time.

Another possibility is to accept a certain trade-off. Finding the key on the target machine may take time $t$, but this will also increase the work of the detector by at least $t$-folds. In the Ebowla framework, this is done by making the Key Finder go through combinations of environmental variables, or traversing through the file system without specifying depth. However, a simpler construct based on hash puzzles (from cryptocurrencies) may do the same trick~\cite{bitcoin-book}. In this case, we can build the puzzle so that it takes ``on-average'' a certain amount of time before the key is found on the target machine. On the non-target machine, this will take roughly the same amount of time per computer profile by the detector before it can decide the search to be futile, as the detector knows the possible number of combinations. A benefit of this approach is that the program author does need more environmental variables to increase search space of the detector.

This argument also shows that it is not possible to devise a Key Finder that will make the detector take exponential time while running for polynomial time on the target machine. This follows from the dovetailing trick~\cite{lovasz-dovetail}. As long as the detector has the target profile in its polynomially sized set of profiles, the detector will eventually find the key by running the same routine. Thus, the only possibility is to ensure that the number of profiles is of exponential size. This can be achieved only via increasing the entropy of the environmental variables used. 



\section{Deniable Encryption}
\label{sec:deny}
Environmental keying does raise suspicion in the sense that a detector knows the presence of encrypted code, even if the encrypted payload may not be decrypted by the detector.\footnote{Indeed, presence of high entropy code can be flagged as indication of malicious intent~\cite{filiol-entropy}.} But what if there are multiple possible decryptions of the same encrypted payload, and only one of them malicious? If the malware detector is only able to decrypt the benign versions then it is likely to flag the resulting program as benign. On the other hand, on the target computer, if it is more likely that the malicious version will be decrypted, then we have a malware that potentially avoids detection, albeit in a more steganographic sense. An encryption system that has such a property is called \emph{deniable encryption}~\cite{deniable-encryption}.

The possible use of deniable encryption in malware has been discussed in~\cite{filiol-malicious, ebowla}. A simple XOR-based scheme illustrates the concept. Suppose we have a benign program $P$ and a malware $M$. Abusing notation, we use $P$ and $M$ to denote the corresponding strings. We also assume that $|P| = |M|$; otherwise, one can use padding to make the two equal. Then, one constructs a key $K_1$ of length equal to $|P| = |M|$ and then computes:
\[
C = M \oplus K_1,
\]
and further computes:
\[
K_2 = C \oplus P.
\]
By the property of XOR, we have 
\[
C = M \oplus K_1 = P \oplus K_2.
\]
Thus, $C$ could either be the encryption of $M$ under the key $K_1$ or the encryption of $P$ under the key $K_2$. Note that, since $K_1$ is random, so too is $K_2$ (without conditioning on the other). The idea is to deny $C$ being an encryption of a malware, and presenting the key $K_2$ if need be to show that the underlying payload was $P$~\cite{filiol-malicious}.

There are some issues with this approach:
\begin{itemize}
    \item The key length needs to be the same as the malware program, which can be quite big. Alternative deniable encryption techniques are presented in~\cite{deniable-encryption}, based on both public-key and shared-key cryptography. However, they are rather inefficient in terms of length of ciphertext per plaintext. One such example is given below.
    \item How does the malware author guarantee that the correct key is used in the target computer? One way is to use an environmental key. But since the key $K_2$ is dependent on the first key, this would make finding the second key hard on non-target computers as well, which defeats the purpose of deniable encryption. 
    \item Another way of providing the two keys is to host them on a server, which is then retrieved by the program at run-time (e.g., HTTP keying). The malware author initially only uploads the key $K_2$, and later at a specific time it posts $K_1$. Thus, the detector will not be able to decrypt the malware block until the relevant key ($K_1$) is uploaded.
    \item There is dependency between the two keys. Once the key $K_2$ is obtained, $K_1$ is not random. An ideal technique (from the malware author's point of view) should be where the two keys are independent. 
    \item The scenario of deniable encryption is not entirely suited to the obfuscation. Deniable encryption assumes a sender and a receiver. The adversary can coerce either one of them to reveal the key. There are also sender-and-receiver deniability schemes, but they assume some intermediaries through which the encrypted communication is transmitted. In the obfuscation scenario, the sender and receiver are both the obfuscated program. Therefore, some public-key encryption schemes described in~\cite{deniable-encryption} do not work in this scenario as the trapdoor information has to be in the code.
\end{itemize}

Another symmetric key encryption scheme from~\cite{deniable-encryption} can be summarised as follows: the obfuscator encrypts a benign program $P$ and malware $M$ separately with two encryption keys. The obfuscated program contains a routine to search for keys (say they are environmental keys). The hash-then-decrypt construct matches both hashes, and if any of them match, it decrypts the program matching the hash. Given that in most computers, the benign routine is always decrypted, the malware writer may use plausible deniability. However, the unencrypted code will always raise suspicion. And the detector would not hastily label this as benign. 

\section{Cryptographic Notions of Obfuscation}
\label{sub:formal-crypto}
There are some key differences between the goal of cryptographic obfuscation and real-world use of cryptographic tools in malware obfuscation. The former considers a program $P$, and requires its obfuscated version, $\hat{P}$, to be functionally equivalent, while at the same time (informally) nothing should be learned about $P$ from the description of $\hat{P}$. However, this does not directly translate to the goal of malware obfuscation (as defined in this paper), which aims to evade detection. In particular, since $\hat{P}$ preserves functionality, the detector $D$ can still label it as malware based on its input-output behaviour, as we shall see shortly. The above informal description of cryptographic obfuscation captures the concept of virtual black box (VBB) obfuscation, defined as follows.

\begin{definition}[VBB Obfuscation~\cite{barak-impossibility}]
\label{def:vbb}
A program obfuscator $\mathcal{O}$ is a virtual black box (VBB) obfuscator, if it preserves functionality and for all PPT algorithms $D$, there exists a PPT simulator $R$ such that for all programs $P$
\begin{equation}
\label{eq:vbb}
\left| \Pr \left[ D(P', 1^\lambda) = 1\right] - \Pr \left[ R^{P}(|P|, 1^\lambda) = 1\right] \right| < \text{negl}(\lambda) 
\end{equation}
where $P' \leftarrow \mathcal{O}(P, 1^\lambda)$.
\end{definition}

In the above, $\lambda$ is the security parameter, and we require any algorithm taking it as input to run in polynomial time in its size as well. The simulator $R$ replaces the algorithm $D$ in the black-box setting, and the superscript $P$ indicates that $R$ only has black-box access to the program $P$. By now it is well-known that VBB obfuscation is impossible in general~\cite{barak-impossibility}. As a result, alternative ``weaker'' notions have been proposed of which the notion of indistinguishability obfuscation (iO) shows the most promise.

\begin{definition}[iO~\cite{barak-impossibility}]
\label{def:io}
A program obfuscator $\mathcal{O}$ is an indistinguishability obfuscator if it preserves functionality, and for all functionally equivalent programs $P_1$ and $P_2$ of equal size
\begin{equation}
\label{eq:io}
\left| \Pr \left[ D(P'_1, 1^\lambda) = 1\right] - \Pr \left[ D(P'_2, 1^\lambda) = 1\right] \right| < \text{negl}(\lambda) 
\end{equation}
where $P'_i \leftarrow \mathcal{O}(P_i, 1^\lambda)$, for $i = 1, 2$.
\end{definition}

Note that functionally equivalent programs means that $P(x) = P'(x)$ for all inputs in the domain of $P$ (and $P'$). Our informal statement above can now be stated formally.

\begin{proposition}
\label{prop:vbb-io}
Let $\mathcal{O}$ be a VBB or iO obfuscator. Then there exists a set $S$ of malware and benign programs, and a detector $D$ such that $\mathcal{O}$ does not evade detection from $D$.
\end{proposition}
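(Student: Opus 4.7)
The plan is to observe that the statement follows almost immediately from Proposition~\ref{prop:func-pres}, because both the VBB and iO definitions explicitly require the obfuscator to preserve functionality. Once functional equivalence is on the table, we already know that no obfuscator, no matter how strong its code-hiding guarantees, can evade every non-trivial detector: the detector can simply classify programs by their input/output fingerprints.

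Concretely, I would proceed as follows. First, inspect Definitions~\ref{def:vbb} and~\ref{def:io} and extract the functionality-preservation clause in each: for all $P$ (and $1^\lambda$), the output $P' \leftarrow \mathcal{O}(P,1^\lambda)$ satisfies $P'(x) = P(x)$ on all inputs $x$. Second, invoke Proposition~\ref{prop:func-pres}, which guarantees the existence of a non-trivial detector $D'$ against which $\mathcal{O}$ fails to evade detection, whenever $\mathcal{O}$ preserves functionality. Concretely, $D'$ is the fingerprint-based detector constructed in that proof: it runs the baseline non-trivial detector $D$ once on the original set $S$, records its labels, and then records the input/output behaviour of each program on a polynomial set of inputs $Q$; after obfuscation, $D'$ recognises each obfuscated program by its I/O fingerprint and reports the previously stored label (with the appropriate randomised handling of functionally equivalent programs that $D$ labelled inconsistently). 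This detector achieves $\alpha^{\mathcal{O}}_{D'} = \alpha_D$ and $\beta^{\mathcal{O}}_{D'} = \beta_D$, so in particular $\alpha^{\mathcal{O}}_{D'} \le \alpha_D$ while $\beta^{\mathcal{O}}_{D'} \not> \beta_D$, violating the evasion condition in Definition~\ref{def:mal-obf-all}.

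The main obstacle, such as it is, is really a bookkeeping one: Proposition~\ref{prop:func-pres} was phrased for obfuscators $\mathcal{O}(P)$, whereas VBB and iO obfuscators additionally take the security parameter $1^\lambda$. I would note that this is immaterial, since $D'$ only interacts with $\mathcal{O}$ through its output program and through polynomial-time I/O sampling, both of which remain available for any fixed $\lambda$; functional equivalence of $P$ and $P' \leftarrow \mathcal{O}(P,1^\lambda)$ is all that the fingerprinting argument consumes. One should also comment on why the strong cryptographic guarantees of VBB and iO do not rescue $\mathcal{O}$: those guarantees concern what can be learned from the \emph{description} of $P'$ (VBB simulates any such computation with black-box access to $P$; iO makes functionally equivalent programs computationally indistinguishable), but they say nothing about I/O behaviour, which is precisely what $D'$ exploits. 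In fact, VBB allows $D'$ black-box access to $P$, which is exactly the fingerprinting capability used in Proposition~\ref{prop:func-pres}.

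Finally, I would close by remarking that this is consistent with our earlier discussion: evading detection requires breaking functional equivalence on non-target inputs (as in environmental keying of Section~\ref{sub:envir}), and no purely code-hiding notion of obfuscation — however cryptographically strong — can substitute for that.
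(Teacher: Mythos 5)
Your proposal matches the paper's proof exactly: both observe that VBB and iO obfuscators preserve functionality by definition, so the fingerprint-based detector of Proposition~\ref{prop:func-pres} applies verbatim. The additional remarks about the security parameter and why code-hiding guarantees cannot defeat I/O fingerprinting are correct elaborations of the same argument.
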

\begin{proof}
Let $D$ be the detector of Proposition~\ref{prop:func-pres}, and let $S$ be a set of programs whose functional inequivalence can be checked in polynomial time as in Proposition~\ref{prop:func-pres}. Then, since $\mathcal{O}$ preserves functionality, the result follows from Proposition~\ref{prop:func-pres}. 
\end{proof}





As we discussed earlier, the obfuscator cannot do anything if the set $S$ has the property that all its programs can be checked for functional inequivalence in polynomial time. Therefore, the only way we can get around the limitation posed by the above proposition is if we let go of the ``functionality preserving'' requirement of the obfuscated program. Namely, we only want the program to act as a malware on a limited (target) set of computers (see Definition~\ref{def:util}). In this case, one may devise a malware that runs as a malware only if the current computer's name equals a predefined list of names (a use of environmental keying). However, there does not seem to be much benefit of obfuscation here over the hash-then-decrypt construct. If the program reads the environmental variables, it is necessary interacting with its environment, and hence we can learn the variables accessed by it through dynamic analysis. This brings us back to the same issue with environmental keying and the hash-then-decrypt construct.

\descr{Obfuscating the Key Finder Routine.} But the advantage here is that the Key Finder routine itself can be obfuscated. For instance, the Key Finder routine could read a bunch of environment variables on the machine, and only use a small subset of them to decide whether to trigger the malware behavior or not. The remaining variables can be discarded. In this case, since the Key Finder routine is obfuscated, the detector will not know which environmental variable is needed, and hence load a larger environmental profile than is actually being used by the program. 

\descr{On VBB and iO.} Cryptographic obfuscation, both VBB and iO, is a widely studied problem. There are works that show that VBB obfuscation is still possible under \emph{idealized models}, e.g., the generic graded encoding model~\cite{vbb-generic-graded}. On the iO end, various constructions have beep proposed via multi-linear maps (or graded encoding schemes) including~\cite{ggh-13, clt-13, ggh-15}. Unfortunately, many of these schemes have been broken, e.g.,~\cite{ggh-15}. An account appears in~\cite{albrecht2020multilinear}. There are also constructions of iO schemes not based on multilinear maps or graded encoding schemes, e.g.,~\cite{jain2021wellfounded}. Finally, there are also implementations of iO schemes~\cite{apon2014implementing} as reported in~\cite{xu2020layered}, and~\cite{implementing-ggh-15}. Thus, while this is still a field with ongoing research, a successful and practical scheme may have an impact in terms of malware obfuscation, e.g., by obfuscating the Key Finder routine in the hash-then-decrypt construct.

\section{Obfuscation using Cryptographic Tools in the Wild}
\label{sec:real-world}


In Table~\ref{tab:obf-sum} in Appendix~\ref{app:sum} we summarize properties of the various malware obfuscation techniques using cryptographic tools discussed in this paper. This includes methods to detect them, as well as real-world tools to apply these techniques. We are also interested in knowing how frequently cryptographic obfuscation is applied in the wild. {This is important to relate the relevance of our categorisation of malware obfuscation using cryptographic tools in the real-world. Namely, we are interested in knowing whether a large percentage of real-world malware are using environmental keying or other encryption algorithms to encrypt part of the malware.} Datasets showing the prevalence of cryptographic obfuscation in malware are few and far between, partly because it is difficult to programmatically infer if a cryptographic library is being used for obfuscation or for other tasks such as encrypting communication, or encrypting local files (in the case of ransomware). In Section~\ref{sub:rw}, we discuss some works in literature that have surveyed obfuscation techniques in the wild. However, since their coverage is not limited to obfuscation via cryptographic techniques (evasive or otherwise), we, therefore, run our own analysis to get an idea of cryptographic malware obfuscation. 
As a first step, we need to have a ground-truth dataset to identify malware that use cryptographic obfuscation. 
For this, we leverage labelled samples (binaries) from the Dike dataset.\footnote{Dike Dataset: \url{https://github.com/iosifache/DikeDataset}} The labels for this dataset are obtained by parsing the categories returned by antivirus vendors aggregated by VirusTotal. Among the categories is that of an ``encryptor,'' which corresponds to programs that use obfuscation. Taking this smaller dataset as ground truth, we extract EMBER features~\cite{ember} to train a machine learning model to predict cryptographic obfuscation over the much larger SOREL dataset~\cite{sorel}. {EMBER features is a list of 2381 features extracted from the program binaries and includes features extracted from the header, imported functions as well as format agnostic features such as byte histogram. These features are automatically extracted from program binaries using EMBER's open-source code repository~\cite{ember}.}    

The specific label from the Dike dataset we use to characterize programs using cryptographic obfuscation are 6,197 samples flagged as ``encryptors'' and not ``ransomware.'' The remaining 3,735 samples are considered as not using cryptographic obfuscation. Our decision to exclude ransomware from encryptors is due to the high likelihood that ransomware will use encryption libraries for tasks other than obfuscation. 
With a training/testing split of 80\%/20\%, a random forest classifier with 100 trees is trained with the EMBER features as inputs and the labels defined above. The resulting classifier yields an AUC of 0.966 in distinguishing between the two defined classes: encryptors and non-encryptors. The confusion matrix with the breakdown of original classes and errors is provided in Figure~\ref{fig:dike_model_cm}. With these results we are satisfied in the ability of the classifier to infer the encryptor label on the unlabelled EMBER features of the SOREL dataset. We sample 200,000 Malware labelled samples, and 200,000 Benign labelled samples from the SOREL dataset. 
The classifier identified 10.9\% (21,834) of the Malware samples as using encryption obfuscation. On the other hand, 6.5\% (13,087) of Benign samples were also flagged as using encryption obfuscation.

\begin{figure}[t]
\centering
\includegraphics[width=0.65\linewidth]{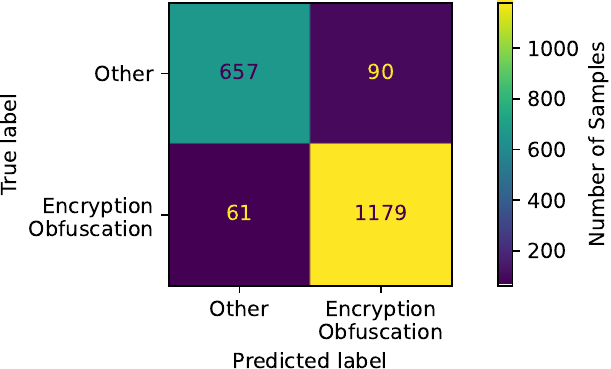}
\caption{Confusion matrix for cryptographic obfuscation detection on the test set from the Dike Dataset; AUC was 0.966, Accuracy was 92.4\%.
}
\label{fig:dike_model_cm}
\end{figure}

\begin{figure}[t]
\centering
\includegraphics[width=0.80\linewidth]{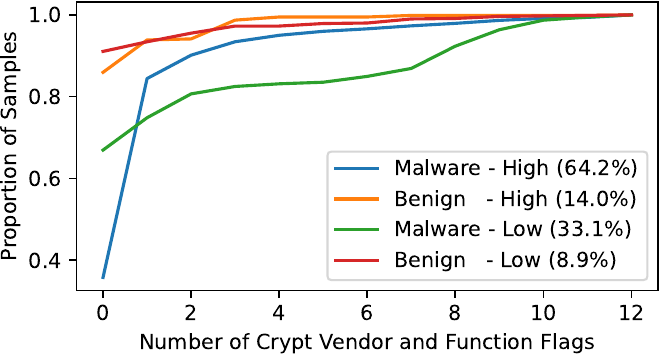}
\caption{Cumulative distribution of SOREL samples with VT reports demonstrating encryption keywords found in imported function names and AV vendor reports. The four groups consist of 1000 samples of the Highest and Lowest probability of using cryptographic obfuscation.
The percentage shown is proportion of samples containing a search term.
}
\label{fig:crypt_func_vendor_sorel}
\end{figure}

To validate these predictions on the SOREL dataset, we analyze a smaller sample of 1,000 samples each with the highest and lowest confidence values labelled by the classifier as using cryptographic obfuscation, from both Malware and Benign samples as labelled in the SOREL dataset. With these 4,000 samples, we obtain a VirusTotal (VT) report detailing the sample's behaviour and consider heuristics of their behaviour that may have warranted the classification. VT is an anti-virus aggregator; when a user submits a binary for analysis, VT scans the binary with a selection of anti-virus vendors. If more than a threshold number of these  vendors flag the binary as malicious, VT will report the sample as malicious.
Within each sample's VT report, each vendor also reports a short indicator of what sort of maliciousness exists. For example, 
Kaspersky may report  \texttt{Packed.Win32.Krap.iu}, which indicates that the sample belongs to the family of malware which protects against reverse engineering, i.e., obfuscation. Additionally, the VT analysis extracts any imported functions from the binary. From both these sources, we use a list of search terms that may indicate obfuscation using encryption. This is the same list used in Dike Dataset and includes: ``crypt'', ``cryp'', ``coder'', ``pack'', and ``krypt''. 

The cumulative distribution of samples which contained VT reports are depicted in Figure~\ref{fig:crypt_func_vendor_sorel}. More than 64\% of the Malware samples labelled as encryptors with high confidence contain at least one vendor labelling it as using cryptographic obfuscation, compared to 33\% among the malware samples labelled as such with low confidence. This sits in contrast with benign samples; only 8.9 and 14\% were flagged as using cryptographic obfuscation by any vendor. This discrepancy is likely due to the ``safe'' nature of benign samples: AV vendors 
not detecting any concerning behaviour, will not label them as belonging to any malware family; hence no ``crypt'' search term. In contrast, our classifier operating on the EMBER features may have captured encryption behaviour otherwise not flagged by the AV vendors. Nevertheless we are confident that the classifier correctly detects encryptors in malware, indicating that 10\% of them as using some form of cryptographic obfuscation.

\section{Related Work}
\label{sub:rw}

\descr{Enviromental Keying.} The concept of environmental keying was first proposed by Riordan and Schneier in~\cite{riordan-environ}. Further investigation of the topic in the academic circle appears in~\cite{hash-cond} and the work on secure triggers~\cite{secure-triggers}. A related concept is used in the mesh design, hash-then-decrypt method proposed by Nate Lawson~\cite{nate-mesh}. This uses the same concept of secure triggers albeit for different applications including gaming (unlocking higher levels, if the user has collected certain items), as well as protecting software. Glynos looks into different ways in which environmental keys can be derived~\cite{glynos-context}. Offensive security practitioners have also looked into environmental keying~\cite{ebowla, leo-derbycon}. Most importantly, the technique has been used in real-world malware to encrypt part or whole of the payload.\footnote{See for instance: \url{https://attack.mitre.org/techniques/T1480/001/}} The most notable example being the Gauss malware~\cite{gauss}. To the best of our knowledge, the encrypted payload of this malware, detected in 2011, has still not been decrypted~\cite{jumping-air-gap}. The malware consisted of a reconaissance component which would collect system information. The system information was then used by attackers to derive environmental keys~\cite{jumping-air-gap}. As noted in~\cite{jumping-air-gap}, the payload could only be decrypted by the unique target in an air-gapped network, and therefore, researchers have failed to decrypt it even after trying millions of combinations and publicly releasing technical details. Such sophisticated use of malware obfuscation can most likely be attributed to state-based actors. In general, encryption may also be used for other purposes, such as concealing the communication with a remote server~\cite{sean-sophos}. A report by Cisco states that an analysis of 400,000 malware binaries revealed about 70\% of them as using some encryption in 2017~\cite{cisco-2018}.


\descr{Theoretical Modeling.} Some researchers have also looked at formalising the concept of programs using environmental keying and the properties they must satisfy. Futoransky et al~\cite{secure-triggers} proposed the notion of secure triggers, which is similar to the environmental keying based hash-then-decrypt construct first proposed in~\cite{riordan-environ}. The difference being that instead of using the hash function to check if the key is valid, they use the encryption of the all 0 string as a predicate. They show that the construct is secure under the universal composability framework~\cite{canetti2001universally}, in the sense that the adversary (detector in our case) does not learn the contents of the encrypted block if the predicate is from a large enough space and under the semantic security of the encryption scheme. Blackthorne et al.~\cite{environ-keying} define security of environmental keying under an analyst (detector) with different capabilities, such as not knowing the target versus after the malware has infected its target. They also look at malware that could change its behaviour after sensing if its environment has been altered by the analyst (e.g., sandboxed environments)~\cite{environ-sense}. The authors of~\cite{malware-rerand} look at malware that derives its environment keys not just from the target but also from the nodes in the network in its path. Decryption is only successful if the correct path order is followed. In this paper, we have proposed a definition independent of the cryptographic technique being used for obfuscation. Our model defines the detector (variously, the analyst or the adversary) in terms of its ability to distinguish between malware and benign programs, instead of pinning it to environmental keying.




\descr{Surveys on Malware Obfuscation.} Several works have studied the prevalence of obfuscation techniques used by malware in the wild. You and Yim~\cite{you2010malware} give a brief taxonomy of malware obfuscation techniques, amongst which they mention encrypted malware. Such malware consists of an encrypted part and a decryptor. The drawback of such malware from the authors' point of view is that antivirus scanners may detect such malware through the commonly used decrypting routine. As mentioned in our paper, we also consider this technique as non-evasive, but mainly due to the fact that malware can be detected at run-time. Aligot~\cite{aligot} is a tool that identifies malware samples that obfuscate the use of block ciphers, hash functions and public key encryption. The authors test their tool on a couple of known families of malware that use cryptographic tools. However, their main experiments are done on synthetically generated samples. The tool may not be able to differentiate the use of cryptographic functions for obfuscation versus for other purposes. The authors in~\cite{hidden-str-obfs} propose the StringHound tool to detect string obfuscation. They enlist some encryption algorithms used for string encryption, e.g., XOR and AES encryption. As the name suggests, their survey is limited to string obfuscation. Several obfuscation techniques for Android malware are discussed in~\cite{MAIORCA201516}, of which string and class encryption are related to the topic of this SoK. The goal of their survey however is to show the effect of these obfuscation techniques on the unobfuscated programs and whether or not anti-malware engines are able to detect them. Wermke et al look at different obfuscation techniques employed on Android apps via popular obfuscation software~\cite{wermke-obfs-google-play}. Related to cryptography are string and class obfuscation, which we categorise as non-evasive techniques. A layered taxonomy of obfuscation techniques is presented by Xu et al \cite{xu2020layered} to help developers protect their software from intellectual theft in a more systematic way. Among the techniques 
they mention the hash-then-decrypt construct from~\cite{hash-cond} as well as cryptographic obfuscation (indistinguishability obfuscation). However, they do not view the techniques in light of their efficacy in evading detection against a model such as the one proposed in our paper. 

\descr{Other Work Related.} There are some other works worth mentioning in the use of cryptographic tools in malware. Some of these works are modifications of environmental keying, and are therefore left out as they retain the basics of that construct. For instance, the Bradley virus~\cite{filiol-malicious} uses nested encryption, where the key for the first encryption is derived from the environment variable. If the hash of this derived key matches the hardcoded hash, then the program continues to decrypt the next segment of the program. Otherwise it deletes the whole malware. {A related use of multiple encryptions is presented in~\cite{di2016transcriptase}. The main issue handled here is that a single encryption layer to encrypt the payload increases the entropy of the program which could therefore be flagged as malware via entropy analysis. The authors' solution is to use a substitution cipher to reduce the entropy of the first encryption layer, followed by a third encryption layer which converts the output of the second layer into a code-like format.} Perhaps, the most notable area of work marrying cryptography and malware is cryptovirology~\cite{young1996cryptovirology}, which deals with the study of malware that use cryptography to encrypt operations on the victim's computer. Ransomware are the most notable example of such malware. However, note that the use of cryptography in such programs is not meant for obfuscation, as encryption is used mainly for denying access to files with only the malware author having the means to decrypt them. Similar to the spirit of the steganographic nature of the deniable encryption scheme, the authors in~\cite{suarez2015stegomalware} discuss the use of steganography to hide malicious executable components in the assets directory of smartphone apps, e.g., multimedia files. As the authors note this only promises evading detection via static analysis as these malicious components would be retrieved during run-time. Another area of work related to the topic of this paper is white-box cryptography~\cite{chow2003white}. The predominant goal of white-box cryptography is security against a powerful adversary who is in control of the execution environment of a cryptographic program (such as a block cipher) and obtains an implementation of it with an embedded key. Unfortunately, most if not all implementations that purportedly provide protection against key extraction from a white-box attacker have been shown to be broken~\cite{bock2020white}. Note that even a successful white-box implementation will not protect a malware from evasion via dynamic analysis. 





\section{Limitations and Open Problems}
\label{sec:discuss}

\begin{itemize}
\item Our definition of malware detection does not take distributional aspects into account. For instance, cryptographic libraries may be used more by malware than benign programs. Thus, the definition does not rule out detectors who base their decisions on such distributions. Likewise, mere presence of any obfuscation may indicate that the program is more likely to be malware rather than benign. However, our definition challenges the detector in being able to differentiate the same obfuscation technique being applied on malware and benign programs, which is arguably more principled. 

\item The only obfuscation techniques using cryptographic tools that are shown to be hard in this paper are hash-then-decrypt construct and deniable encryption using environmental keys. Both use environmental keys and therefore can only evade detection if targeting only a small subset of machines. This appears to be a fundamental limitation of malware obfuscation. It appears that any technique that runs the malicious payload on a large number of machines after a small amount of time can in principle be detected; the detector only needs to wait for sufficiently long before it is able to detect malware.  

\item On the experimental side, while we have shown some results on the real-world use of cryptographic obfuscation by malware programs, this needs to be further validated and refined into different classes of techniques for obfuscation. In particular, we do not know how prevalent is the use of environmental keying in malware obfuscation.

\item Many real-world malware obfuscation techniques are only used to avoid specific types of malware analysis, e.g., use of packing to avoid static analysis, or assessing the run-time environment to detect whether a program is being dynamically analysed in a virtual environment. In our paper, we have not considered these specifics as these techniques do not provide provable guarantees of evading detection. 

\end{itemize}



\section*{Ethical Considerations}
All instances of malware obfuscation detailed in this paper are in the public domain. As such, we have not introduced any malware obfuscation technique that may jeopardize the security of systems. The aim of this paper is to simply categorise them from the point of view of difficulty of detection.

\ifjicv
\section*{Disclaimer}
\else
\section*{Acknowledgements}
\fi
This work was partially supported by the Australian Defence Science and Technology (DST) Group under the Next Generation Technology Fund (NGTF) scheme. The datasets generated and analysed in this paper are available from the corresponding author on reasonable request.

\ifjicv
\else
\bibliographystyle{IEEEtran}
\fi
\bibliography{crypto-ref}

\ifjicv
\begin{appendices}
\else
\appendices
\fi



\section{Summary of Cryptographic Malware Obfuscation Techniques}
\label{app:sum}
\ifjicv
\begin{table*}[]
\caption{Cryptographic techniques for obfuscating malware, tools that apply them, and methods to detect them.}
\label{tab:obf-sum}
\centering
\resizebox{\textwidth}{!}{
\begin{tabular}{c | c | c | c | c} 
\toprule%
\multirow{2}{*}{Method} & \multirow{2}{*}{Purpose} & \multicolumn{2}{|c|}{Identification*/Detection} & \multirow{2}{*}{Obfuscation Tools}  \\
\cmidrule{3-4}
& & Static & Dynamic & \\
 \midrule%
 String Encryption & Hide C\&C URLs~\cite{apvrille-crypto} & Entropy analysis*~\cite{lyda2007entropy, hidden-str-obfs} & Slicing~\cite{hidden-str-obfs} & DexGuard~\cite{wermke-obfs-google-play} \\ 
 & Hide class/method names~\cite{wermke-obfs-google-play} & Cryptographic libraries*~\cite{hidden-str-obfs} & Sandbox/debugging~\cite{afianian2019malware} & DexProtector~\cite{wermke-obfs-google-play}\\
 \hline
 Class Encryption & Hide implementation & Entropy analysis* & Sandbox/debugging &  DexGuard \\
 & & Cryptographic libraries* & & DexProtector \\
 \hline
 Environmental Keying & Hide payload & Entropy analysis* & Brute-force & Ebowla~\cite{ebowla}  \\
 & & Cryptographic libraries* &  & \\
 & & Brute-force~\cite{afianian2019malware} & & \\
 \hline
 Packing & Hide/compress payload & Entropy analysis*  & Sandbox/debugging & UPX~\cite{lyda2007entropy} \\
  & & Signature \& unpacking &  & \\
 \hline
Stegomalware & Hide malware executables & Statistical tests~\cite{provos2001detectingstego} & Sandbox/debugging & F5~\cite{f5, suarez2015stegomalware} \\
\botrule
 \end{tabular}
}
\end{table*}
\else
\begin{table*}[!h]
\caption{Cryptographic techniques for obfuscating malware, tools that apply them, and methods to detect them.}
\label{tab:obf-sum}
\centering
\resizebox{\textwidth}{!}{
\begin{tabular}{c | c | c | c | c} 
\hline%
\multirow{2}{*}{Method} & \multirow{2}{*}{Purpose} & \multicolumn{2}{|c|}{Identification*/Detection} & \multirow{2}{*}{Obfuscation Tools}  \\
\cline{3-4}
& & Static & Dynamic & \\
 \hline\hline%
 String Encryption & Hide C\&C URLs~\cite{apvrille-crypto} & Entropy analysis*~\cite{lyda2007entropy, hidden-str-obfs} & Slicing~\cite{hidden-str-obfs} & DexGuard~\cite{wermke-obfs-google-play} \\ 
 & Hide class/method names~\cite{wermke-obfs-google-play} & Cryptographic libraries*~\cite{hidden-str-obfs} & Sandbox/debugging~\cite{afianian2019malware} & DexProtector~\cite{wermke-obfs-google-play}\\
 \hline
 Class Encryption & Hide implementation & Entropy analysis* & Sandbox/debugging &  DexGuard \\
 & & Cryptographic libraries* & & DexProtector \\
 \hline
 Environmental Keying & Hide payload & Entropy analysis* & Brute-force & Ebowla~\cite{ebowla}  \\
 & & Cryptographic libraries* &  & \\
 & & Brute-force~\cite{afianian2019malware} & & \\
 \hline
 Packing & Hide/compress payload & Entropy analysis*  & Sandbox/debugging & UPX~\cite{lyda2007entropy} \\
  & & Signature \& unpacking &  & \\
 \hline
Stegomalware & Hide malware executables & Statistical tests~\cite{provos2001detectingstego} & Sandbox/debugging & F5~\cite{f5, suarez2015stegomalware} \\
\hline\hline
 \end{tabular}
}
\end{table*}
\fi

We summarize the main types of cryptographic malware obfuscation techniques discussed in this paper in Table~\ref{tab:obf-sum}. We also highlight some of the reasons why they are applied, methods to detect them, and obfuscation tools that apply these techniques. We differentiate between techniques that can only identify an obfuscated program (identification) versus those that can also detect whether the obfuscated program is malware or not (detection). The techniques in the identification category are marked with a `*' in the table. For instance, entropy analysis~\cite{lyda2007entropy, hidden-str-obfs} or the detection of certain cryptographic functions and libraries~\cite{hidden-str-obfs} are good static analysis methods to identify if some strings in a program have been obfuscated. However, they themselves are not sufficient to decide if the program is malware or not, since they do not decrypt the string itself. 

For the case of environmental keying (based on the hash-then-decrypt construct mentioned in this paper), a brute-force strategy can be used both statically or dynamically to find the environmental key. This is because the detector (malware analyst) only needs to find the pre-image of the hash, which can be done offline. We note that the hash-then-decrypt construct falls under the general category of \emph{trigger}-based malware~\cite{afianian2019malware}. In general, the trigger value may not be encrypted or hashed. In such a case, there are automated techniques that identify triggers (e.g., looking for an if-then-else construct) and find trigger values that initiate these triggers~\cite{brumley2008automatically, crandall2006temporal, moser2007exploring, afianian2019malware}. However, they do not work if the triggers are encrypted or hashed~\cite{moser2007exploring}. Conspicuous by its absence in the table is the deniable encryption scheme (see Section~\ref{sec:deny}). This is because the efficacy of the scheme depends on how it is applied. For instance, if the keys for both benign and malware decryption are shipped with the program then it can be readily detected as malware or benign using dynamic analysis. On the other hand, if it is used in the hash-then-decrypt construct, then it gives no benefit over environmental keying. 

Packing is a technique to compress a binary to reduce its size. But it has also been adopted by malware authors to make it hard to analyze their programs by static analysis. Although compression is not encryption, this technique is often lumped in with encoded/encrypted malware~\cite{kang2007renovo} since it renders the program incomprehensible without unpacking. Hence we have decided to include it in the table. One of the ways to detect a packed malware, is to look for signatures of the packing algorithm, e.g., UPX~\cite{lyda2007entropy}. Once this is identified, unpacking can be done to reveal the nature of the underlying program. Note that this can be done using static analysis. Finally, the techniques mentioned for detecting stegomalware (see Section~\ref{sub:rw}), i.e., statistical tests, and to produce stegomalware, i.e., the F5 steganographic algorithm~\cite{f5}, are in fact general-purpose for image steganography, and are not specifically made for malware steganography. 

\ifjicv
\end{appendices}
\else
\fi

\end{document}